\DeclareTextFontCommand{\emph}{\slshape}
\renewcommand{\paragraph}{%
	\@startsection{paragraph}{4}%
	{\z@}{1.75ex \@plus 1ex \@minus .2ex}{-0.7em}%
	{\normalfont\normalsize\bfseries}%
}
\let\originalleft\left
\let\originalright\right
\renewcommand{\left}{\mathopen{}\mathclose\bgroup\originalleft}
\renewcommand{\right}{\aftergroup\egroup\originalright}
\pgfplotsset{compat=1.10}
\setlist[enumerate,1]{label=(\arabic*)}
\setlist[itemize,1]{label=--}
\setlist[itemize,2]{label=--}
\setlist[itemize,3]{label=--}
\setlist[itemize,4]{label=--}
\theoremstyle{definition}
\newtheorem{theorem}{Theorem}
\newtheorem{proposition}{Proposition}
\newtheorem{lemma}{Lemma}
\newtheorem{corollary}{Corollary}
\newtheorem{remark}{Remark}
\newtheorem{observation}{Observation}
\newtheorem{example}{Example}
\newtheorem{definition}{Definition}
\newtheoremstyle{named}
	{\topsep}					
	{\topsep}					
	{}							
	{0pt}						
	{\bfseries}					
	{}							
	{5pt plus 1pt minus 1pt}	
	{\thmnote{#3}}				
\theoremstyle{named}
\newtheorem{namedthm}{}
\xpatchcmd{\proof}{\itshape}{\proofheadfont}{}{}
\newcommand{\proofheadfont}{\slshape}
\crefname{page}{p.}{pp.}
\crefname{equation}{equation}{equations}
\crefname{section}{section}{sections}
\crefname{subsection}{section}{sections}
\crefname{subsubsection}{section}{sections}
\crefname{appsec}{appendix}{appendices}
\crefname{supplsec}{supplemental appendix}{supplemental appendices}
\crefname{footnote}{footnote}{footnotes}
\crefname{figure}{figure}{figures}
\crefname{table}{table}{tables}
\crefname{theorem}{theorem}{theorems}
\crefname{proposition}{proposition}{propositions}
\crefname{lemma}{lemma}{lemmata}
\crefname{corollary}{corollary}{corollaries}
\crefname{remark}{remark}{remarks}
\crefname{observation}{observation}{observations}
\crefname{example}{example}{examples}
\crefname{fact}{fact}{facts}
\crefname{definition}{definition}{definitions}
\crefname{assumption}{assumption}{assumptions}
\crefname{exercise}{exercise}{exercises}
\crefname{notation}{notation}{notation}
\crefname{claim}{claim}{claims}
\crefname{conjecture}{conjecture}{conjectures}
\DeclareMathOperator*{\argmax}{arg\,max}
\DeclareMathOperator*{\co}{co}
\newcommand{\R}{\mathbf{R}}
\newcommand{\N}{\mathbf{N}}
\newcommand{\Z}{\mathbf{Z}}
\newcommand{\1}{\boldsymbol{1}}
\DeclarePairedDelimiter\abs{\lvert}{\rvert}
\newcommand*{\xslant}[2][76]{%
	\begingroup
	\sbox0{#2}%
	\pgfmathsetlengthmacro\wdslant{\the\wd0 + cos(#1)*\the\wd0}%
	\leavevmode
	\hbox to \wdslant{\hss
		\tikz[
			baseline=(X.base),
			inner sep=0pt,
			transform canvas={xslant=cos(#1)},
		] \node (X) {\usebox0};%
		\hss
		\vrule width 0pt height\ht0 depth\dp0 %
	}%
	\endgroup
}
\newcommand*{\xslantmath}{}
\def\xslantmath#1#{%
	\@xslantmath{#1}%
}
\newcommand*{\@xslantmath}[2]{%
	\ensuremath{%
		\mathpalette{\@@xslantmath{#1}}{#2}%
	}%
}
\newcommand*{\@@xslantmath}[3]{%
	\xslant#1{$#2#3\m@th$}%
}
\def\namedlabel#1#2{\begingroup
	#2%
	\def\@currentlabel{#2}%
	\phantomsection\label{#1}\endgroup
}
\let\save@mathaccent\mathaccent
\newcommand*\if@single[3]{%
	\setbox0\hbox{${\mathaccent"0362{#1}}^H$}%
	\setbox2\hbox{${\mathaccent"0362{\kern0pt#1}}^H$}%
	\ifdim\ht0=\ht2 #3\else #2\fi
	}
\newcommand*\rel@kern[1]{\kern#1\dimexpr\macc@kerna}
\newcommand*\widebar[1]{\@ifnextchar^{{\wide@bar{#1}{0}}}{\wide@bar{#1}{1}}}
\newcommand*\wide@bar[2]{\if@single{#1}{\wide@bar@{#1}{#2}{1}}{\wide@bar@{#1}{#2}{2}}}
\newcommand*\wide@bar@[3]{%
	\begingroup
	\def\mathaccent##1##2{%
	  \let\mathaccent\save@mathaccent
	  \if#32 \let\macc@nucleus\first@char \fi
	  \setbox\z@\hbox{$\macc@style{\macc@nucleus}_{}$}%
	  \setbox\tw@\hbox{$\macc@style{\macc@nucleus}{}_{}$}%
	  \dimen@\wd\tw@
	  \advance\dimen@-\wd\z@
	  \divide\dimen@ 3
	  \@tempdima\wd\tw@
	  \advance\@tempdima-\scriptspace
	  \divide\@tempdima 10
	  \advance\dimen@-\@tempdima
	  \ifdim\dimen@>\z@ \dimen@0pt\fi
	  \rel@kern{0.6}\kern-\dimen@
	  \if#31
	    \overline{\rel@kern{-0.6}\kern\dimen@\macc@nucleus\rel@kern{0.4}\kern\dimen@}%
	    \advance\dimen@0.4\dimexpr\macc@kerna
	    \let\final@kern#2%
	    \ifdim\dimen@<\z@ \let\final@kern1\fi
	    \if\final@kern1 \kern-\dimen@\fi
	  \else
	    \overline{\rel@kern{-0.6}\kern\dimen@#1}%
	  \fi
	}%
	\macc@depth\@ne
	\let\math@bgroup\@empty \let\math@egroup\macc@set@skewchar
	\mathsurround\z@ \frozen@everymath{\mathgroup\macc@group\relax}%
	\macc@set@skewchar\relax
	\let\mathaccentV\macc@nested@a
	\if#31
	  \macc@nested@a\relax111{#1}%
	\else
	  \def\gobble@till@marker##1\endmarker{}%
	  \futurelet\first@char\gobble@till@marker#1\endmarker
	  \ifcat\noexpand\first@char A\else
	    \def\first@char{}%
	  \fi
	  \macc@nested@a\relax111{\first@char}%
	\fi
	\endgroup
}
\newcommand{\Situations}{\mathcal{S}}
\newcommand{\situations}{S}
\newcommand{\Objects}{\mathcal{X}}
\newcommand{\objects}{X}
\title{\scshape The comparative statics of dominance\thanks{We are grateful for comments from Ilia Krasikov, Giacomo Lanzani, Eddie Schlee, and a seminar audience at Arizona State University. Martín Grunwald provided excellent research assistance. Curello acknowledges support
from the German Research Foundation (DFG) through CRC TR 224 (Project B02).}}
\author{%
Gregorio Curello \\
Mannheim
\and
Ludvig Sinander \\
Oxford
\and
Mark Whitmeyer \\
Arizona State}
\date{22 January 2026}
\begin{document}

\maketitle

\begin{abstract}
In finite problems comprising objects, situations, and an object- and situation-contingent payoff function, we study the comparative statics of the set of undominated objects, meaning those for which there exists no mixture over objects that is superior whatever the situation. We consider both weak and strict dominance (corresponding to different degrees of `strictness' in the definition of superiority). Our main theorem characterises those payoff transformations which robustly expand the not-weakly-dominated and not-strictly-dominated sets: the necessary and sufficient condition is that payoffs be transformed separately across situations, in either a monotone--concave or a constant manner. We apply our results to Pareto frontiers and games.
\end{abstract}

\section{Introduction}
\label{sec:intro}

Undominatedness is a fundamental solution concept in economics. Abstractly, in a finite problem comprising a finite set of \emph{objects,} a finite set of \emph{situations,} and an object- and situation-dependent payoff function, the not-weakly-dominated (NWD) and not-strictly-dominated (NSD) sets comprise those objects which are not weakly/strictly dominated by any mixture over objects, where `weakly dominated' means weakly superior in every situation and strictly superior in some, and `strictly dominated' means strictly superior in every situation. In a given player's action-choice problem in a normal-form game, the objects are her own actions, the situations are profiles of actions of the other players, and weak and strict dominance coincide with the familiar game-theoretic concepts bearing those names. In a social planner's problem, the objects are allocations, the situations are individuals, payoffs describe each individual's preferences over (lotteries over) allocations, and the NWD set is known as the \emph{Pareto frontier.}

In this paper, we study the comparative statics of dominance: which transformations of payoffs expand or contract the undominated sets? We are inspired by \textcite{Weinstein2016,BattigalliEtal2016}, who identified sufficient conditions for expansion of the NSD set; we seek conditions that are necessary and sufficient.

Following the comparative-statics literature \parencite[e.g.][]{Topkis1978,MilgromShannon1994,QuahStrulovici2009}, we formalise our questions in a robust, detail-free way, by asking which payoff transformations expand the NWD set in \emph{every} finite problem, and similarly for NSD and for contraction (rather than expansion) of the NWD and NSD sets.

Our main theorem provides the answer: a payoff transformation robustly contracts the NWD and NSD sets if and only if it is \emph{situation-wise,} i.e. transforms payoffs separately situation by situation, and each situation's transformation is convex and strictly increasing. Similarly, a payoff transformation robustly \emph{expands} the NWD (NSD) set if and only if it is situation-wise and either (i)~each situation's transformation is concave and strictly (weakly) increasing or (ii)~the transformation is degenerate in a precise sense to do with constant transformations. Economically, situation-wiseness means that within-situation payoffs are never contaminated by payoffs in other situations, and (by \hyperref[theorem:pratt]{Pratt's theorem},) convexity/concavity plus strict monotonicity means precisely decreased/increased risk-aversion in each situation.

Our above definition of dominance is the standard one whereby an object is undominated only if no \emph{mixture} over objects dominates it. In an extension, we characterise the comparative statics of the sets of objects that are not (weakly or strictly) dominated by any \emph{object.} The conditions are weaker: monotonicity remains necessary, but concavity/convexity does not, and situation-wiseness remains necessary for expansion but not for contraction.

Our first application is to Pareto dominance. Here, the objects $\objects$ are allocations, the situations $\{1,2,\dots,n\}$ are individuals, $U(\cdot,i) : \objects \to \R$ is the payoff function of individual $i$, and the not-weakly-dominated set is called the \emph{Pareto frontier.} Our main question in this application is which transformations of individuals' payoffs expand or contract the Pareto frontier in \emph{every} sub-economy (featuring only a subset of individuals). By our main theorem, the answer is (approximately): separate, individual-by-individual increases or decreases of risk-aversion.

Our second application is to strategic dominance in games. We fix a non-empty finite set $I$ of players, and consider all finite normal-form games $(I,(A_i,u_i)_{i \in I})$ played by $I$. For a given player $i \in I$, the objects $A_i$ are her own actions, the situations $A_{-i}$ are profiles of her opponents' actions, her payoff function is $u_i : A_i \times A_{-i} \to \R$, and weak and strict dominance coincide with the usual game-theoretic concepts bearing those names. Our main theorem characterises those payoff transformations which robustly expand or contract the not-weakly-dominated and not-strictly-dominated sets. Applying this result repeatedly yields comparative statics for the set of rationalisable action profiles.

\subsection{Related literature}
\label{sec:intro:lit}

The classic literature on undominatedness (also known as `admissibility') spans statistical decision theory, game theory, and social choice. Much of the focus of this literature is on identifying conditions under which every undominated object is `sometimes-optimal', i.e. expected-payoff-maximising against \emph{some} (subjective) probability distribution over situations.%
    \footnote{See \textcite{Wald1939,Stein1955,fishburn1975separation,Pearce1984,Borgers1993,Battigalli1997,BattigalliEtal2016,CheKimKojimaRyan2024,ChengBorgers2025}.}
We ask a different sort of question, concerning comparative statics.

Two beautiful theorems in \textcite{CheKimKojima2021} identify those payoff transformations which lead the NWD set to `shift up' (i.e. become `higher') in a set-wise sense---specifically, in the weak set order. Our comparative-statics question is orthogonal to theirs: what we ask is which transformations of payoffs lead the NWD and NSD sets to \emph{expand} (or contract).

Most closely related are \textcite{Weinstein2016,BattigalliEtal2016}, who consider broadly the same com\-pa\-ra\-tive-statics question as we do, focusing on strict dominance.%
    \footnote{See also \textcite[][section~8.1]{deOliveiraLamba2025}.}
What these authors show is that any situation-independent concave and strictly increasing transformation of payoffs expands the NSD set (we recover this result in \cref{sec:appl:games} below). Our theorem extends and qualifies this finding, by identifying the entire class of payoff transformations which enjoy this expansion property, and also identifying those which enjoy the reverse `contraction' property. We furthermore answer the same questions for the NWD set.

At a higher level, we contribute to the comparative-statics literature \parencite[e.g.][]{Topkis1978,MilgromShannon1994,QuahStrulovici2009}. The most closely related paper here is \textcite{QuahStrulovici2012}, as we detail in a separate note \parencite{CurelloSinanderWhitmeyer2025note}.

This paper also contributes to the programme of Pratt's (\citeyear{Pratt1964}) theorem,%
    \footnote{Pratt's theorem is stated in \cref{sec:background:pratt} below.}
which seeks characterisations of the familiar `less risk-averse than' relation (which was first defined by \textcite{Yaari1969}). In particular, our result may be viewed as a new characterisation of `less risk-averse than' in terms of the undominated sets in different finite problems. Other recent work on `economic' characterisations of `less risk-averse than' includes \textcite{MaccheroniMarinacciWangWu2025,CurelloSinanderWhitmeyer2025}.

\subsection{Roadmap}
\label{sec:intro:roadmap}

We rehearse the relevant background in \cref{sec:background}. In \cref{sec:setup}, we introduce our setup. \Cref{sec:result} contains our main result, \Cref{thm:maintheorem}, characterising which transformations robustly expand or contract the undominated sets. The applications to Pareto dominance and to games are studied in \cref{sec:appl}. \Cref{sec:pure} contains a mirror theorem for dominance by \emph{objects} rather than by mixtures (\Cref{theorem:psd}). All proofs omitted from the main text may be found in the appendix.

\section{Background}
\label{sec:background}

In this section, we tersely recount some of the theory of comparative risk-aversion, as well as a theorem of Sierpiński on concave functions.

\subsection{Comparative risk aversion}
\label{sec:background:pratt}

Given a non-empty finite set $\objects$, we write $\Delta(\objects)$ for the set of all mixtures over $\objects$, i.e. all functions $p : \objects \to [0,1]$ which satisfy $\sum_{x \in \objects} p(x) = 1$. Recall that an expected-utility decision-maker is one for whom there exists a function $u : \objects \to \R$ such that she weakly prefers a mixture $p \in \Delta(\objects)$ to another mixture $q \in \Delta(\objects)$ if and only if $\sum_{x \in \objects} u(x) p(x) \geq \sum_{x \in \objects} u(x) q(x)$.

\begin{definition}[\cite{Yaari1969}]
For a non-empty set $\objects$ and functions $u,v : \objects \to \R$, we say that $v$ is \emph{more risk-averse than} $u$ if and only if for any $x \in \objects$ and $p \in \Delta(\objects)$, $\sum_{y \in \objects} v(y) p(y) \geq \mathrel{(>)} v(x)$ implies $\sum_{y \in \objects} u(y) p(y) \geq \mathrel{(>)} u(x)$.
\end{definition}

Write `$\co A$' for the convex hull of a set $A \subseteq \R$. Recall Pratt's (\citeyear{Pratt1964}) theorem, which characterises `more risk-averse than':

\begin{namedthm}[Pratt's theorem.]
	\label{theorem:pratt}
	For a non-empty finite set $\objects$ and functions $u,v : \objects \to \R$, the following are equivalent:

	\begin{enumerate}[label=(\alph*)]
	
		\item \label{item:pratt:lra} $v$ is more risk-averse than $u$.

		\item \label{item:pratt:trans} 
        There exists a concave and strictly increasing function $\phi : \co(u(\objects)) \to \R$ such that $v(x) = \phi(u(x))$ for every $x \in \objects$.

		\item \label{item:pratt:curv} The following two properties hold:

		\begin{enumerate}[label=(\roman*),topsep=0em]
		
			\item \label{item:pratt:curv:ordequiv} For any $x,y \in \objects$, $u(x) \geq \mathrel{(>)} u(y)$ implies $v(x) \geq \mathrel{(>)} v(y)$.

			\item \label{item:pratt:curv:compress} For any $x,y,z \in \objects$, if $u(x) < u(y) < u(z)$, then
			\begin{equation*}
				\frac{u(z)-u(y)}{u(y)-u(x)}
				\geq \frac{v(z)-v(y)}{v(y)-v(x)} .
			\end{equation*}
		
		\end{enumerate}
	
	\end{enumerate}
\end{namedthm}

\begin{remark}
	\label{remark:generalobjects}
	In the literature, comparative risk-aversion and its characterisation are almost only ever considered in the special case in which the objects are monetary: $\objects \subseteq \R$. However, by inspection, properties \ref{item:pratt:lra}--\ref{item:pratt:curv} are equally meaningful whatever the nature of the objects, and in fact they are equivalent even outside of the monetary case, as asserted above. See \textcite{CurelloSinanderWhitmeyer2025} for a proof.
\end{remark}

When considering multiple situations $\Situations$ and two payoff functions $U,V : \objects \times \Situations \to \R$, \hyperref[theorem:pratt]{Pratt's theorem} may be applied situation by situation: for each $s \in \Situations$, apply the theorem to $u(\cdot) \coloneqq U(\cdot,s)$ and $v(\cdot) \coloneqq V(\cdot,s)$.

\subsection{Sierpiński's theorem on concave functions}
\label{sec:background:sierpinski}

Our proofs rely on the following beautiful result due to \textcite{Sierpinski1920}.

\begin{namedthm}[Sierpiński's theorem.]
	\label{theorem:sierpinski}
    Fix any non-empty open interval $I \subseteq \R$. A Lebesgue-measurable function $f : I \to \R$ is concave if and only if for all $u,v \in I$, $\frac{1}{2} f(u) + \frac{1}{2} f(v) \leq f\bigl(\frac{1}{2}u+\frac{1}{2}v\bigr)$.
\end{namedthm}

\section{Setup}
\label{sec:setup}

In this section, we introduce our environment, formalise the notion of a payoff transformation, and pose our comparative-statics question(s).

\subsection{Environment}
\label{sec:setup:setup}

There is a non-empty finite set $\Situations$ of \emph{situations.} A decision-maker faces a set $\objects$ of \emph{objects,} worries about some subset $\situations \subseteq \Situations$ of situations, and earns payoff $U(x,s) \in \R$ from object $x \in \objects$ in situation $s \in \Situations$. Formally, we consider all \emph{finite problems,} meaning tuples $(\objects,\situations,U)$ where $\objects$ is a non-empty finite set, $\situations$ is a non-empty subset of $\Situations$, and $U$ is a map $\objects \times \Situations \to \R$.%
    \footnote{A technical detail is that `all finite problems' is not actually a set (in ZFC), since `all non-empty finite sets $\objects$' is not a set (in ZFC). The way around this annoyance is to fix throughout an (arbitrary) infinite set $\Objects$ of imaginable objects, and to restrict the definition of a finite problem $(\objects,\situations,U)$ by additionally requiring that $\objects \subseteq \Objects$. To save clutter, we leave all such `$\subseteq \Objects$' caveats implicit throughout the paper.}
We contemplate random choice, and assume expected utility, so that the payoff from mixture $p \in \Delta(\objects)$ in situation $s \in \Situations$ is $\sum_{x \in \objects} U(x,s) p(x)$.%
    \footnote{Here $\Delta(\objects)$ denotes the set of all functions $p : \objects \to [0,1]$ satisfying $\sum_{x \in \objects} p(x) = 1$.}

The set of not-strictly-dominated objects in a finite problem $(\objects,\situations,U)$ is
\begin{align*}
	\text{NSD}(\objects,\situations,U) \coloneqq
	\bigl\{
	x \in \objects :
	{}&\text{there is no $p \in \Delta(\objects)$ such that}
	\\
	{}&\text{$\textstyle\sum_{y \in \objects} U(y,s) p(y) > U(x,s)$ for every $s \in \situations$}
	\bigr\} ,
\end{align*}
and the set of not-\emph{weakly}-dominated objects is
\begin{multline*}
	\text{NWD}(\objects,\situations,U) \coloneqq
		\bigl\{
		x \in \objects : \text{there is no $p \in \Delta(\objects)$ such that}
		\\
	\begin{aligned}
		{}&\text{$\textstyle\sum_{y \in \objects} U(y,s) p(y) \geq U(x,s)$ for every $s \in \situations$ and}
		\\
		{}&\text{$\textstyle\sum_{y \in \objects} U(y,s) p(y) > U(x,s)$ for some $s \in \situations$}
		\bigr\} .
	\end{aligned}
\end{multline*}

Note that for each object $x \in \objects$ in a finite problem $(\objects,\situations,U)$, we may view $U(x,\cdot) : \Situations \to \R$ as a vector in $\R^\Situations$ (namely, the vector $\left( U(x,s) \right)_{s \in \Situations}$). As usual, for vectors $u,v \in \R^n$ (where $n \in \N$), `$u>v$'  means `$u \geq v$ and $u \neq v$', and $u \gg v$ means that $u_i > v_i$ for all $i \in \left\{1,\dots,n\right\}$.%
    \footnote{Hence $\text{NSD}(\objects,\situations,U) = \{ x \in \objects : \text{$\nexists p \in \Delta(\objects)$ s.t. $\textstyle\sum_{y \in \objects} U(y,\cdot) p(y) \gg U(x,\cdot)$} \}$ and $\text{NWD}(\objects,\situations,U) = \{ x \in \objects : \text{$\nexists p \in \Delta(\objects)$ s.t. $\textstyle\sum_{y \in \objects} U(y,\cdot) p(y) > U(x,\cdot)$} \}$}

\subsection{Payoff transformations}
\label{sec:seteup:transformations}

For a given finite problem $(\objects,\situations,U)$, our question is which transformations of the payoff $U$ cause the undominated sets $\text{NSD}(\objects,\situations,U)$ and $\text{NWD}(\objects,\situations,U)$ to expand, and which ones cause them to contract. By a \emph{transformation,} we mean a map $\Psi : \R^{\objects \times \Situations} \to \R^{\objects \times \Situations}$, which associates each payoff function $U : \objects \times \Situations \to \R$ with a(nother) payoff function $[\Psi(U)] : \objects \times \Situations \to \R$.

We restrict attention to transformations $\Psi$ that treat objects \emph{neutrally} and \emph{separately.} The former property requires that for any payoff functions $U,V : \objects \times \Situations \to \R$ and any permutation $\pi : \objects \to \objects$ such that $U(x,\cdot) = V(\pi(x),\cdot)$ for all $x \in \objects$, we have $[\Psi(U)](x,\cdot) = [\Psi(V)](\pi(x),\cdot)$ for all $x \in \objects$. The latter property requires that for any $U,V : \objects \times \Situations \to \R$ and $x \in \objects$ such that $U(x,\cdot) = V(x,\cdot)$, we have $[\Psi(U)](x,\cdot) = [\Psi(V)](x,\cdot)$.

Recall that for each $x \in \objects$, we may view $U(x,\cdot)$ as a vector in $\R^\Situations$. It is easy to see that if a transformation $\Psi : \R^{\objects \times \Situations} \to \R^{\objects \times \Situations}$ treats objects neutrally and separately, then there exists a map $\Phi : \R^\Situations \to \R^\Situations$ such that $\Psi(U) = \Phi \circ U$ for any $U : \objects \times \Situations \to \R$. We henceforth identify each transformation $\Psi$ with its associated map $\Phi : \R^\Situations \to \R^\Situations$.

Each family $(\phi_s)_{s \in \Situations}$ of functions $\R \to \R$ generates a transformation $\Phi : \R^\Situations \to \R^\Situations$ via $(u_s)_{s \in \Situations} \mapsto (\phi_s(u_s))_{s \in \Situations}$. Such transformations are `situation-wise': for each situation $s \in \Situations$, the situation-$s$ output $\phi_s(u_s) \in \R$ depends only on the situation-$s$ input $u_s \in \R$. Formally, given a family $(\phi_s)_{s \in \Situations}$ of functions $\R \to \R$, we say that a transformation $\Phi : \R^\Situations \to \R^\Situations$ is \emph{$(\phi_s)_{s \in \Situations}$-situation-wise} if and only if $\Phi((u_s)_{s \in \Situations}) = (\phi_s(u_s))_{s \in \Situations}$ for every $(u_s)_{s \in \Situations} \in \R^\Situations$. We say that a transformation $\Phi : \R^\Situations \to \R^\Situations$ is \emph{situation-wise} if and only if it is $(\phi_s)_{s \in \Situations}$-situation-wise for some family $(\phi_s)_{s \in \Situations}$ of functions $\R \to \R$.

\begin{example}
\label{example:expand}
Let $(\phi_s)_{s \in \Situations}$ be a family of concave and strictly increasing functions $\R \to \R$, and let $\Phi : \R^\Situations \to \R^\Situations$ be $(\phi_s)_{s \in \Situations}$-situation-wise. By \hyperref[theorem:pratt]{Pratt's theorem}, this transformation increases risk-aversion in every situation.
\end{example}

\begin{example}
\label{example:simple}
\textcite{Weinstein2016,BattigalliEtal2016} considered the special case of \Cref{example:expand} in which risk-aversion is increased \emph{in the same way} in each situation: there is a single concave and strictly increasing function $\phi : \R \to \R$, and $\Phi : \R^\Situations \to \R^\Situations$ is given by $\Phi((u_s)_{s \in \Situations}) = (\phi(u_s))_{s \in \Situations}$ for every $(u_s)_{s \in \Situations} \in \R^\Situations$.
\end{example}

\begin{example}
\label{example:cubic}
Let $\phi : \R \to \R$ be strictly increasing but neither concave nor convex (e.g. $u \mapsto u^3$), and let $\Phi : \R^\Situations \to \R^\Situations$ be given by $\Phi((u_s)_{s \in \Situations}) = (\phi(u_s))_{s \in \Situations}$ for every $(u_s)_{s \in \Situations} \in \R^\Situations$. By \hyperref[theorem:pratt]{Pratt's theorem}, this transformation neither increases nor decreases risk-aversion.
\end{example}

\begin{example}
\label{example:insurance}
Fix a probability distribution $\mu \in \Delta(\Situations)$, and let $\Phi : \R^\Situations \to \R^\Situations$ be given by $\Phi((u_s)_{s \in \Situations}) = \left( \sum_{s \in \Situations} u_s \mu(s), \dots, \sum_{s \in \Situations} u_s \mu(s) \right)$ for every $(u_s)_{s \in \Situations} \in \R^\Situations$. This transformation provides (full) insurance: in any finite problem $(\objects,\situations,U)$, each object $x \in \objects$ now earns $\sum_{s \in \Situations} U(x,s) \mu(s)$ whatever the situation. Obviously this transformation is not situation-wise.
\end{example}

\subsection{Comparative-statics question(s)}
\label{sec:seteup:questions}

Our question is which payoff transformations $\Phi : \R^\Situations \to \R^\Situations$ cause the undominated sets to expand or contract. We seek a `robust' answer to this question, which does not depend on which particular finite problem is considered. Concretely, we are interested in the following four properties which a given transformation $\Phi : \R^\Situations \to \R^\Situations$ may or may not enjoy:

\begin{enumerate}
\item \label{item:maintheorem1} $\text{NSD}(\objects,\situations,U) \subseteq \text{NSD}(\objects,\situations,\Phi \circ U)$ for every finite problem $(\objects,\situations,U)$.
\item \label{item:maintheorem2} $\text{NWD}(\objects,\situations,U) \subseteq \text{NWD}(\objects,\situations,\Phi \circ U)$ for every finite problem $(\objects,\situations,U)$.
\item \label{item:shrink1} $\text{NSD}(\objects,\situations,\Phi \circ U) \subseteq \text{NSD}(\objects,\situations,U)$ for every finite problem $(\objects,\situations,U)$.
\item \label{item:shrink2} $\text{NWD}(\objects,\situations,\Phi \circ U) \subseteq \text{NWD}(\objects,\situations,U)$ for every finite problem $(\objects,\situations,U)$.
\end{enumerate}
For each of these four properties, our question is which transformations $\Phi$ enjoy that property.

By asking `robust' questions, requiring comparative statics to hold across all finite problems, we obtain answers which are independent of the details of any given problem. The answers to our `robust' questions will furthermore turn out to be simple, providing straightforward sufficient conditions for comparative statics which are as weak as possible (subject to `robustness').

In more detail, the advantage of considering various sets $\objects$ of objects is that this yields conclusions which are robust to any constraints which the decision-maker may face. Similarly, considering different sets $\situations \subseteq \Situations$ of situations yields conclusions that are robust to the decision-maker having a particular model or world-view, in which only certain situations are contemplated. For example, the arrival of new information may allow the decision-maker to rule out certain situations.

In some applications, ruling out situations is not natural or well-motivated. For such cases, we will have some results below that are `partially robust', considering only finite problems $(\objects,\situations,U)$ such that $\situations=\Situations$.

\section{Main theorem}
\label{sec:result}

\begin{theorem}
	\label{thm:maintheorem}
    Assume that $2 \leq \abs*{\Situations} < \infty$, and consider a transformation $\Phi : \R^\Situations \to \R^\Situations$.

	\begin{enumerate}[label=(\alph*)]
	
		\item \label{dom_noncomponent:nsd} Property~\ref{item:maintheorem1} holds if and only if there exists a family $(\phi_s)_{s \in \Situations}$ of weakly increasing functions $\R \to \R$ such that
        $\Phi$ is $(\phi_s)_{s \in \Situations}$-situation-wise and either (i)~$\phi_s$ is constant for all but at most one $s \in \Situations$ or (ii)~$\phi_s$ is concave for all $s \in \Situations$.

		\item \label{dom_noncomponent:nwd} Property~\ref{item:maintheorem2} holds if and only if there exists a family $(\phi_s)_{s \in \Situations}$ of functions $\R \to \R$ such that $\Phi$ is $(\phi_s)_{s \in \Situations}$-situation-wise and either (i)~$\phi_s$ is concave and strictly increasing for each $s \in \Situations$ or (ii)~$\phi_s$ is constant for every $s \in \Situations$.

        \item \label{dom_noncomponent:grow} 
        Property~\ref{item:shrink1} holds if and only if Property~\ref{item:shrink2} holds, which holds if and only if there exists a family $(\phi_s)_{s \in \Situations}$ of convex and strictly increasing functions $\R \to \R$ such that $\Phi$ is $(\phi_s)_{s \in \Situations}$-situation-wise.
	
	\end{enumerate}
\end{theorem}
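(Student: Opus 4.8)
The plan is to prove each of the three equivalences by establishing sufficiency (transformations of the stated form enjoy the property) and necessity (no others do), the former being quick and the latter carrying the weight. \textbf{Sufficiency.} For part~(c): if $\Phi$ is $(\phi_s)_{s\in\Situations}$-situation-wise with each $\phi_s$ convex and strictly increasing, and a mixture $p$ dominates $x$ under $U$, then for each $s$ Jensen's inequality gives $\sum_{y}\phi_s(U(y,s))p(y)\geq\phi_s\bigl(\sum_y U(y,s)p(y)\bigr)\geq\phi_s(U(x,s))$, with any strict inequality in situation $s$ preserved by strict monotonicity of $\phi_s$; so the \emph{same} $p$ dominates $x$ under $\Phi\circ U$, giving Properties~\ref{item:shrink1} and~\ref{item:shrink2}. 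For part~(b), case~(i): by \hyperref[theorem:pratt]{Pratt's theorem} applied situation by situation, concavity and strict monotonicity of $\phi_s$ make $\phi_s\circ U(\cdot,s)$ more risk-averse than $U(\cdot,s)$, so weak domination under $\Phi\circ U$ pulls back via the same mixture; case~(ii) is immediate since then $\mathrm{NWD}(\Phi\circ U)=\objects$. Part~(a), case~(ii): as in part~(b) case~(i) after replacing $\phi_s$ by $\phi_s+\eps\cdot\mathrm{id}$, which for $\eps>0$ is concave and strictly increasing and, because $\situations$ is finite, preserves strict domination once $\eps$ is small; case~(i) is direct, since if all but one $\phi_s$ is constant then no object is strictly dominated under $\Phi\circ U$ (the case $\abs{\situations}=1$ requiring a separate, easy argument using weak monotonicity of the lone active $\phi_s$).

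\textbf{Necessity, part~I: structure and monotonicity.} I would first show that each of Properties~\ref{item:maintheorem1}--\ref{item:shrink2} forces $\Phi$ to be $(\phi_s)_{s\in\Situations}$-situation-wise for some family $(\phi_s)$, i.e. that $[\Phi(u)]_t$ depends on $u$ only through $u_t$; this is extracted from two- and three-object problems, using that among two objects $x$ is weakly dominated exactly when the other's payoff vector is $>x$'s, and that among three objects — one the midpoint of two incomparable ones — all three lie in both undominated sets under $U$, so that altering a single coordinate of a single payoff vector changes undominated-set membership unless $\Phi$ respects the coordinate structure. Next, two-object problems whose payoff vectors agree in all but one coordinate force each $\phi_s$ to be weakly increasing; incomparable two-object problems force each $\phi_s$ to be either strictly increasing or globally constant, and a pairwise argument across situations (available because $\situations$ may be any subset of $\Situations$) upgrades this to the dichotomy in the theorem — for the expansion properties, all $\phi_s$ constant or all non-constant (hence strictly increasing); for the contraction properties, all strictly increasing (the constant alternative now being incompatible).

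\textbf{Necessity, part~II: curvature.} The engine is the three-object problem $\{A,B,C\}$ with $U(A,\cdot)$ and $U(C,\cdot)$ incomparable and $U(B,\cdot)$ their midpoint (perturbed slightly downward, at continuity points of the $\phi_s$, for the contraction properties): $B$'s status in the undominated set under $\Phi\circ U$ reduces to an inequality between the left/right increment ratios of the active situations' $\phi_s$'s. Running this in both orientations gives first that at most one $\phi_s$ fails midpoint-concavity (for expansion) respectively midpoint-convexity (for contraction); \hyperref[theorem:sierpinski]{Sierpiński's theorem} — applicable because part~I makes each $\phi_s$ Lebesgue-measurable — then upgrades midpoint-concavity/convexity to the real thing for those $\phi_s$. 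For the possibly-exceptional $\phi_s$ one argues that, were it non-concave (resp.\ non-convex) while the others are concave (resp.\ convex) and not all constant, then — using that a concave (resp.\ convex) strictly increasing function is differentiable with positive derivative somewhere, so that a neighbouring situation's increment ratio can be driven arbitrarily close to $1$ — the same three-object construction produces a problem violating the property. This forces the degenerate alternative~(i) in parts~(a) and~(b), and in part~(c) forces \emph{every} $\phi_s$ convex; assembling parts~I and~II yields all three characterisations, Property~\ref{item:shrink1}$\Leftrightarrow$Property~\ref{item:shrink2} included, since each is equivalent to the same class of $\Phi$.

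\textbf{Main obstacle.} I expect part~I — establishing situation-wiseness — to be the hardest step: the properties are quantified over \emph{all} finite problems, and converting that into pointwise separability of $\Phi$ requires choosing exactly the right small problems and carefully tracking how domination-by-a-mixture (not merely pairwise domination) transforms. The curvature step is also delicate, mainly in the bookkeeping of the increment-ratio inequalities and in the appeal to differentiability of concave/convex functions; Sierpiński's theorem is indispensable there, since finite problems only ever expose midpoint averages yet we must conclude full concavity/convexity.
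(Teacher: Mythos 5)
Your sufficiency arguments and your curvature machinery (the midpoint three-object problems, the Sierpi\'nski upgrade from midpoint concavity/convexity, and the derivative/increment-ratio trick for the one possibly exceptional situation) match the paper's proof in substance. The genuine gap is in your ``Necessity, part~I'' for the contraction properties~\ref{item:shrink1} and \ref{item:shrink2}: situation-wiseness there is asserted, not proved, and the gadget you name cannot deliver it. A three-object problem in which all three objects are undominated under $U$ imposes no constraint at all under contraction (contraction only forces objects that are dominated under $U$ to remain dominated under $\Phi \circ U$), and two-object problems only pin down the order of transformed coordinates, i.e.\ strict coordinate-wise monotonicity of $\Phi$ --- which does \emph{not} imply situation-wiseness: the paper's example following \Cref{theorem:psd} exhibits a strictly coordinate-wise monotone, non-situation-wise $\Phi$. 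The paper closes exactly this hole in \Cref{lem:situation-wise}: it defines sections $\phi_s \coloneqq \Phi_s \circ \pi^s$, shows each $\phi_s$ must be continuous by constructing a three-object problem (one object placed slightly below a midpoint, hence strictly dominated under $U$) that would remain undominated under $\Phi$ at any jump of $\phi_s$, and then uses the sandwich inequality $\phi_s(a_s-) \leq \Phi_s(a) \leq \phi_s(a_s+)$ to conclude $\Phi_s(a) = \phi_s(a_s)$. Nothing in your sketch plays this role, and your later parenthetical ``at continuity points of the $\phi_s$'' in the contraction curvature step presumes a continuity you never establish.

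A second problem is that your necessity sketch conflates NSD with NWD. The claimed part-I dichotomy --- ``incomparable two-object problems force each $\phi_s$ to be either strictly increasing or globally constant,'' and for the expansion properties ``all $\phi_s$ constant or all non-constant (hence strictly increasing)'' --- is false for Property~\ref{item:maintheorem1} and inconsistent with part~(a) of the very statement you are proving: taking $\phi_s(x) = \min\{x,0\}$ for every $s$ satisfies Property~\ref{item:maintheorem1} (all $\phi_s$ concave and weakly increasing) yet is neither strictly increasing nor constant, and part~(a)(i) permits exactly one non-constant, merely weakly increasing $\phi_s$ rather than ``all constant.'' The reason is that ties never create \emph{strict} dominance, so the incomparable two-object problems you invoke constrain NWD expansion but not NSD expansion; the paper therefore runs separate necessity lemmata for the two notions (\Cref{lemma:NSDcomponentonly} versus \Cref{lemma:NWDcomponentonly}, the latter with its Case~2 treating weakly-but-not-strictly increasing $\phi_s$). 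Finally, a comparison worth noting: for the contraction curvature the paper does not redo your perturbed-midpoint construction at all --- having continuity and strict monotonicity in hand, it inverts $\Phi$ and applies the expansion-necessity lemmata (stated on arbitrary open intervals for precisely this purpose) to $\Phi^{-1}$, which is cleaner and sidesteps the continuity bookkeeping your direct argument would require.
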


\Cref{thm:maintheorem} says that situation-wiseness is necessary for a transformation $\Phi$ to robustly expand or robustly contract the undominated sets, that a monotonicity-plus-concavity/convexity property is also necessary for robust expansion/contraction, and that these necessary conditions are jointly sufficient. One implication is that the transformations in \Cref{example:expand,example:simple} robustly expand the undominated sets. The transformation in \Cref{example:cubic} neither robustly expands nor robustly contracts the undominated sets, since it neither increases nor decreases risk-aversion. The insurance-type transformation in \Cref{example:insurance} also neither robustly expands nor robustly contracts the undominated sets, since it is not situation-wise.

The proofs of the three parts \ref{dom_noncomponent:nsd}--\ref{dom_noncomponent:grow} are similar to each other. Sufficiency follows from a short elegant argument substantially due to \textcite{Weinstein2016,BattigalliEtal2016} (although these authors use it to establish a weaker claim). Focusing on part~\ref{dom_noncomponent:grow}, this argument goes as follows. The condition in \ref{dom_noncomponent:grow} implies, by \hyperref[theorem:pratt]{Pratt's theorem}, that applying $\Phi$ makes the decision-maker less risk-averse in every situation, meaning (by definition) that whenever a mixture $p \in \Delta(\objects)$ is (strictly) preferred to an object $x \in \objects$ in a situation $s \in \situations$ of a finite problem $(\objects,\situations,U)$, the same is true in the transformed problem $(\objects,\situations,\Phi \circ U)$. Thus any object $x \in \objects$ which was dominated (by some mixture $p \in \Delta(\objects)$) in $(\objects,\situations,U)$ remains dominated (by that same mixture $p$) in $(\objects,\situations,\Phi \circ U)$; in other words, applying $\Phi$ expands the dominated set, so shrinks the \emph{un}dominated set.

Necessity is more difficult. For each of the three parts \ref{dom_noncomponent:nsd}--\ref{dom_noncomponent:grow}, we proceed in two steps, first showing that the transformation $\Phi$ must be situation-wise, and then that the functions $(\phi_s)_{s \in \Situations}$ must satisfy the stated properties.

In the first step, there is a perhaps surprising asymmetry between expansion (parts~\ref{dom_noncomponent:nsd} and \ref{dom_noncomponent:nwd}) and contraction (part~\ref{dom_noncomponent:grow}). It is almost immediate that the expansive properties~\ref{item:maintheorem1} and \ref{item:maintheorem2} necessitate situation-wiseness.%
\footnote{\label{footnote:sitwise_nec}For any situation $s \in \Situations$, whenever $u,v \in \R^\Situations$ satisfy $u_s = v_s$, it must be that $\Phi_s(u) = \Phi_s(v)$, since otherwise letting $x \neq y$, $U(x,\cdot) \coloneqq u$ and $U(y,\cdot) \coloneqq v$, we have $\text{NSD}(\{x,y\},\{s\},U) = \{x,y\} \nsubseteq \text{NSD}(\{x,y\},\{s\},\Phi \circ U)$, and similarly for $\text{NWD}$.}

By contrast, it takes some work to show that the contractive properties~\ref{item:shrink1} and \ref{item:shrink2} necessitate situation-wiseness. It is easily shown that these properties necessitate \emph{strict situation-wise monotonicity:} for any $u,v \in \R^\Situations$ and any situation $s \in \Situations$, $u_s > v_s$ implies $\Phi_s(u) > \Phi_s(v)$.%
    \footnote{If $\Phi_s(u) \leq \Phi_s(v)$ and $u_s > v_s$, then fixing $x \neq y$ and letting $U(x,\cdot) \coloneqq v$ and $U(y,\cdot) \coloneqq u$ yields $\text{NSD}(\{x,y\},\{s\},U) \not\ni x \in \text{NSD}(\{x,y\},\{s\},\Phi \circ U)$, and similarly $\text{NWD}$.}
We then show that for each $s \in \Situations$, the map $u_s \mapsto \Phi_s\left((u_t)_{t \in \Situations}\right)$ must be continuous; this and strict situation-wise monotonicity together imply situation-wiseness. To establish continuity, we show that any discontinuity would permit the construction of a finite problem $(\{x,y,z\},\Situations,U)$ that violates properties~\ref{item:shrink1} and \ref{item:shrink2}.

\begin{remark}
\label{remark:rich}
This continuity argument relies on the richness of the payoff domain. If we were to only contemplate (say) integer payoffs, i.e. if $\Phi$ were a map $\Phi : \Z^\Situations \to \Z^\Situations$, then the continuity argument would fail, and situation-wiseness would in fact \emph{not} be necessary for properties~\ref{item:shrink1} and \ref{item:shrink2}, as we show in \cref{app:pf_remark:rich}.
\end{remark}

Having shown that properties~\ref{item:maintheorem1}--\ref{item:shrink2} each imply the existence of a family $(\phi_s)_{s \in \Situations}$ of functions $\R \to \R$ such that the transformation $\Phi$ is $(\phi_s)_{s \in \Situations}$-situation-wise, it remains to establish that $(\phi_s)_{s \in \Situations}$ must have the properties asserted by parts~\ref{dom_noncomponent:nsd}--\ref{dom_noncomponent:grow}. This is accomplished by a pair of propositions given in the next section.

\subsection{Situation-wise transformations}
\label{sec:result:props}

We now identify the conditions on \emph{situation-wise} transformations that are necessary and sufficient for comparative statics. These results form part of the proof of \Cref{thm:maintheorem}, but are also of independent interest.

We characterise not only the fully robust `for all finite problems' com\-pa\-ra\-tive-statics properties~\ref{item:maintheorem1}--\ref{item:shrink2}, but also the following four `partially robust' comparative-statics properties in which the full set $\Situations$ of situations (rather than a subset $\situations \subseteq \Situations$) is always contemplated:

\begin{enumerate}[label=(\arabic*{*})]
\item \label{item:NSDcomponent} $\text{NSD}(\objects,\Situations,U) \subseteq \text{NSD}(\objects,\Situations,\Phi \circ U)$ for every finite problem $(\objects,\Situations,U)$.
\item \label{item:NWDcomponent} $\text{NWD}(\objects,\Situations,U) \subseteq \text{NWD}(\objects,\Situations,\Phi \circ U)$ for every finite problem $(\objects,\Situations,U)$.
\item \label{item:NSDcomponentshrink} $\text{NSD}(\objects,\Situations,\Phi \circ U) \subseteq \text{NSD}(\objects,\Situations,U)$ for every finite problem $(\objects,\Situations,U)$.
\item \label{item:NWDcomponentshrink} $\text{NWD}(\objects,\Situations,\Phi \circ U) \subseteq \text{NWD}(\objects,\Situations,U)$ for every finite problem $(\objects,\Situations,U)$.
\end{enumerate}
Characterising properties \ref{item:NSDcomponent}--\ref{item:NWDcomponentshrink} is valuable for applications, as we shall see in \cref{sec:appl} below.

Our first proposition concerns only expansions of undominated sets.

\begin{proposition}
	\label{proposition:dom_component}
    Assume that $2 \leq \abs*{\Situations} < \infty$. Fix a family $(\phi_s)_{s \in \Situations}$ of functions $\R \to \R$, and let $\Phi : \R^\Situations \to \R^\Situations$ be $(\phi_s)_{s \in \Situations}$-situation-wise.

	\begin{enumerate}[label=(\alph*)]

        \item \label{dom_component:nsd_main} Property~\ref{item:maintheorem1} holds if and only if either (i)~$\phi_s$ is concave and weakly increasing for each $s \in \Situations$ or (ii)~$\phi_s$ is weakly increasing for some $s \in \Situations$ and constant for all other $s \in \Situations$.
	
		\item \label{dom_component:nsd} Property~\ref{item:NSDcomponent} holds if and only if either (i)~$\phi_s$ is concave and weakly increasing for each $s \in \Situations$ or (ii)~$\phi_s$ is constant for some $s \in \Situations$.

		\item \label{dom_component:nwd} Property~\ref{item:maintheorem2} holds if and only if property~\ref{item:NWDcomponent} holds, which holds if and only if either (i)~$\phi_s$ is concave and strictly increasing for each $s \in \Situations$ or (ii)~$\phi_s$ is constant for every $s \in \Situations$.
        
	\end{enumerate}
\end{proposition}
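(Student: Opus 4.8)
The plan is, for each of the three parts, to establish sufficiency and necessity of the stated conditions separately. Part~\ref{dom_component:nsd_main} will then follow from part~\ref{dom_component:nsd} applied to two-element situation sets, together with the elementary fact that $\text{NSD}(\objects,\{s\},U)=\text{NWD}(\objects,\{s\},U)=\argmax_{y\in\objects}U(y,s)$; and the ``if and only if'' chains in parts~\ref{dom_component:nsd} and~\ref{dom_component:nwd} reduce to necessity, sufficiency, and the triviality that requiring expansion over all problems $(\objects,\Situations,U)$ is weaker than over all $(\objects,\situations,U)$.

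\emph{Sufficiency.} Under alternative~(i)---each $\phi_s$ concave and (strictly) increasing---Jensen's inequality gives $\sum_{y}\phi_s(U(y,s))p(y)\le\phi_s\bigl(\sum_{y}U(y,s)p(y)\bigr)$ for every mixture $p$, and monotonicity of $\phi_s$ lets one cancel it from $\phi_s\bigl(\sum_{y}U(y,s)p(y)\bigr)\ge\phi_s(U(x,s))$; hence any mixture that (strictly, resp.\ weakly) dominates $x$ after transformation already dominated it before, so the undominated set cannot shrink---this is the \textcite{Weinstein2016,BattigalliEtal2016} argument run in the expansion direction. Under alternative~(ii): if $\phi_s$ is constant for a contemplated situation $s$, then no object is strictly dominated after transformation, so $\text{NSD}(\objects,\situations,\Phi\circ U)=\objects$; if every $\phi_s$ is constant, then $\Phi\circ U$ is object-independent and nothing is dominated; and the lone remaining singleton case of~\ref{dom_component:nsd_main}(ii) is immediate from monotonicity of the active $\phi_s$.

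\emph{Necessity.} This goes in two steps. \emph{Step~1 (monotonicity):} two-object problems show that, outside alternative~(ii), each $\phi_s$ must be weakly increasing (for~\ref{dom_component:nsd}) or strictly increasing (for~\ref{dom_component:nwd}). For instance, if $\phi_{s_0}$ reverses the order of some $a<b$ while another situation $t$ has a non-constant $\phi_t$, then the two objects with $(s_0,t)$-payoffs ``$a$ and high'' and ``$b$ and low'' are mutually undominated under $U$, yet one of them is dominated under $\Phi\circ U$. \emph{Step~2 (concavity):} assume each $\phi_s$ is (strictly) increasing and not all constant, and suppose $\phi_{s_0}$ fails concavity; since a monotone $\phi_{s_0}$ is Lebesgue-measurable, \hyperref[theorem:sierpinski]{Sierpiński's theorem} supplies $r_1<r_2$ and $\lambda\in(0,1)$ with $\lambda\phi_{s_0}(r_1)+(1-\lambda)\phi_{s_0}(r_2)>\phi_{s_0}(\bar r)$ for $\bar r:=\lambda r_1+(1-\lambda)r_2$. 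Using a second non-constant situation $t$, build a three-object problem in which $x$ has situation-$s_0$ payoff $\bar r$, the other two objects straddle it in $s_0$ (payoffs $r_1$ and $r_2$) while trading off against it in $t$, and the situation-$t$ payoffs are tuned so that all three payoff vectors are \emph{collinear through} $v_x$, the payoff vector of $x$, in the $(s_0,t)$-plane. Collinearity makes $x$ undominated under $U$; but the product map $(\phi_{s_0},\phi_t)$ \emph{bends} this line, and one exhibits a mixture---strictly worse than $x$ in situation $s_0$ under $U$, hence not dominating $x$ there---that becomes strictly better than $x$ in $s_0$ under $\Phi\circ U$ while remaining better in $t$; so $x$ is dominated after transformation, contradicting expansion. (When $\abs{\Situations}>2$, the remaining situations are made inert: held object-independent for the NWD properties, and filled so that $x$ is strictly lowest for the NSD properties, which adds nothing to \emph{strict} dominance.)

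\emph{Main obstacle.} Step~2 is the crux, and the naive version of the construction fails: if the two outer objects are placed symmetrically about $x$ and the $(\lambda,1-\lambda)$-mixture of them is made to witness both the concavity failure and the domination, then forcing $x$ to be undominated under $U$ pins that mixture's situation-$t$ payoff weakly below $x$'s, so by monotonicity of $\phi_t$ the mixture cannot strictly beat $x$ in situation $t$ after transformation unless $\phi_t$ \emph{also} fails concavity---which we cannot assume (e.g.\ $\phi_{s_0}$ convex, $\phi_t$ linear). The fix is to dominate $\Phi(v_x)$ with an \emph{asymmetric} mixture that places positive weight on $x$ itself and is strictly worse than $x$ in situation $s_0$ under $U$, exploiting the one-sidedness of the concavity gap of $\phi_{s_0}$; fitting this mixture into the narrow feasible window---while keeping $x$ exactly on the frontier of the undominated set---is the technical heart of the proof.
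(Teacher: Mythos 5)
Your overall architecture is the same as the paper's: sufficiency via the Jensen/Pratt argument plus the constant cases, necessity of (weak resp.\ strict) monotonicity via two-object problems, necessity of concavity via a three-object problem whose payoff vectors are collinear (so the middle object is undominated before the transformation), with \hyperref[theorem:sierpinski]{Sierpiński's theorem} supplying the concavity violation, and part~\ref{dom_component:nsd_main} recovered by applying part~\ref{dom_component:nsd} to sub-families of situations together with the singleton (argmax) case. The gap is precisely at the step you yourself label ``the technical heart'': you assert, but never establish, that the ``narrow feasible window'' for the dominating mixture is non-empty, and the fix you describe does not supply the missing mechanism. Putting positive weight on $x$ itself is inert: for any mixture $p$ with $p(x)<1$, its transformed expected payoff beats $\Phi_s(U(x,\cdot))$ in a situation if and only if the renormalised mixture on the two outer objects does, and a mixture of $x$ with only the low-$s_0$ object can never strictly beat $x$ in $s_0$ after the transformation, by monotonicity of $\phi_{s_0}$. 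So your construction necessarily reduces to an asymmetric mixture over the two outer objects, tilted towards the one that is high in $t$ --- and the burden is to show that such a tilt can simultaneously (i) stay within the margin of the $s_0$-concavity violation and (ii) strictly beat $\phi_t$ of the midpoint in situation $t$.

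That simultaneous choice is not automatic and is where the paper does real work. For arbitrary $t$-payoffs $c_2<c_1$, a strictly concave $\phi_t$ (or one with zero derivative a.e., e.g.\ of devil's-staircase type) may require a tilt exceeding what the $s_0$-gap tolerates; the $t$-payoffs must therefore be chosen as part of the construction, close together where $\phi_t$ rises at a positive rate. The paper isolates exactly this as \Cref{lemma:cs}: for every $\varepsilon>0$ and every weakly increasing, non-constant $f$ there exist $u<v$ and $\alpha\in[1/2,1/2+\varepsilon]$ with $\alpha f(v)+(1-\alpha)f(u)>f(\tfrac12 u+\tfrac12 v)$, proved by the dichotomy ``either $f$ already violates midpoint concavity (Sierpiński), or $f$ is concave, hence has a point of strictly positive derivative at which a small $(\tfrac12+\varepsilon,\tfrac12-\varepsilon)$ perturbation works''; the $s_0$-violation is first robustified to the whole window $[1/2,1/2+\varepsilon]$ so that the single $\alpha$ delivered by the lemma works in both situations at once. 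Without this lemma (or an equivalent), your Step~2 --- and with it the necessity halves of all three parts --- is incomplete. Two smaller imprecisions: for the NWD parts, the weakly-but-not-strictly-increasing case needs its own two-object construction (a flat spot of $\phi_i$ paired against a situation where $\phi_k$ strictly gains), which your ``for instance'' does not cover; and for the NSD parts, making the extra situations ``inert'' requires choosing their payoff values where each (non-constant) $\phi_k$ strictly increases, since being strictly lowest under $U$ alone does not guarantee strict inferiority after the transformation.
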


The core logic behind \Cref{proposition:dom_component} is most easily understood by focusing on the following special case, in which only situation-independent transformations ($\phi_s = \phi$ for every $s \in \Situations$) and the fully robust properties~\ref{item:maintheorem1} and \ref{item:maintheorem2} are considered.

\begin{corollary}
	\label{corr:dom_component_single}
    Assume that $2 \leq \abs*{\Situations} < \infty$. Fix a function $\phi : \R \to \R$, and let $\Phi : \R^\Situations \to \R^\Situations$ be given by $\Phi((u_s)_{s \in \Situations}) = (\phi(u_s))_{s \in \Situations}$ for every $(u_s)_{s \in \Situations} \in \R^\Situations$.

    \begin{itemize}

        \item Property~\ref{item:maintheorem1} holds if and only if $\phi$ is concave and weakly increasing.

        \item \label{dom_component:nwd_single} Property~\ref{item:maintheorem2} holds if and only if either (i)~$\phi$ is concave and strictly increasing or (ii)~$\phi$ is constant.

    \end{itemize}
\end{corollary}

The sufficiency (`if') claims in \Cref{corr:dom_component_single} follow from the `short elegant argument' described immediately after \Cref{thm:maintheorem} above, except when $\phi$ is constant: in that case, the undominated sets trivially expand when $\Phi$ is applied because $\text{NSD}(\objects,\situations,\Phi \circ U) = \objects = \text{NWD}(\objects,\situations,\Phi \circ U)$ for every finite problem $(\objects,\situations,U)$.

\begin{proof}[Proof of necessity (`only if') in \Cref{corr:dom_component_single}]
We first show the necessity of the monotonicity properties, then of concavity. Fix arbitrary $x \neq y \neq z \neq x$. Further fix two situations $s \neq t$ in $\Situations$ (this is possible since $\abs*{\Situations} \geq 2$).

Suppose that $\phi$ is not weakly increasing, i.e. there are $u>v$ such that $\phi(u) < \phi(v)$. Then for $U(x,s) \coloneqq u$ and $U(y,s) \coloneqq v$, it holds that $\text{NSD}(\{x,y\},\{s\},U) \ni x \notin \text{NSD}(\{x,y\},\{s\},\Phi \circ U)$, so property~\ref{item:maintheorem1} fails.

Suppose that $\phi$ is neither strictly increasing nor constant, i.e. there are $u>v$ and $u' \neq v'$ such that $\phi(u) \leq \phi(v)$ and $\phi(u') < \phi(v')$. Then for
\begin{equation*}
    U\left(\text{row},\text{column}\right)
    \coloneqq
    \begin{tabular}{r|cc}
          & $s$ & $t$ \\ \hline
        $x$ & $u$ & $u'$ \\
        $y$ & $v$ & $v'$
    \end{tabular} 
\end{equation*}
we have $\text{NWD}(\{x,y\},\{s,t\},U) \ni x \notin \text{NWD}(\{x,y\},\{s,t\},\Phi \circ U)$, so property~\ref{item:maintheorem2} fails.

Assume for the remainder that $\phi$ is weakly increasing. Suppose that $\phi$ is not concave. Then by \hyperref[theorem:sierpinski]{Sierpiński's theorem} from \cref{sec:background:sierpinski} above (applicable since increasing functions are Lebesgue-measurable), there exist $u>v$ such that $\frac{1}{2} \phi(u) + \frac{1}{2} \phi(v) > \phi\bigl( \frac{1}{2} u + \frac{1}{2} v \bigr)$. Hence for
\begin{equation*}
    U\left(\text{row},\text{column}\right)
    \coloneqq
    \begin{tabular}{r|ccc}
          & $s$ & $t$ \\ \hline
        $x$ & $u$ & $v$ \\
        $y$ & $v$ & $u$ \\
        $z$ & $\frac{1}{2} u + \frac{1}{2} v$ & $\frac{1}{2} u + \frac{1}{2} v$
    \end{tabular} 
\end{equation*}
we have $z \in \text{NWD}(\{x,y,z\},\{s,t\},U) \subseteq \text{NSD}(\{x,y,z\},\{s,t\},U)$, and keeping in mind the mixture $p \in \Delta(\{x,y,z\})$ with $p(x) \coloneqq 1/2 \eqqcolon p(y)$, we see that $z \notin \text{NSD}(\{x,y,z\},\{s,t\},\Phi \circ U) \supseteq \text{NWD}(\{x,y,z\},\{s,t\},\Phi \circ U)$. Thus properties~\ref{item:maintheorem1} and \ref{item:maintheorem2} both fail.
\end{proof}

Our second proposition concerns only \emph{contractions} of undominated sets.

\begin{proposition}
	\label{proposition:dom_component2}
    Assume that $2 \leq \abs*{\Situations} < \infty$. Fix a family $(\phi_s)_{s \in \Situations}$ of functions $\R \to \R$, and let $\Phi : \R^\Situations \to \R^\Situations$ be $(\phi_s)_{s \in \Situations}$-situation-wise. Property~\ref{item:shrink1} holds if and only if property~\ref{item:NSDcomponentshrink} holds, which holds if and only if property~\ref{item:shrink2} holds, which holds if and only if property~\ref{item:NWDcomponentshrink} holds, which holds if and only if $\phi_s$ is convex and strictly increasing for each $s \in \Situations$.
\end{proposition}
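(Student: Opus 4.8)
The plan is to prove the five statements equivalent by closing two triangles of implications, both with vertex ``[every $\phi_s$ is convex and strictly increasing]'': (i)~convexity-and-strict-monotonicity of all $\phi_s$ implies the fully robust contraction properties~\ref{item:shrink1} and~\ref{item:shrink2}; (ii)~each fully robust property implies its partially robust counterpart (\ref{item:shrink1} implies~\ref{item:NSDcomponentshrink}, and~\ref{item:shrink2} implies~\ref{item:NWDcomponentshrink}), which is trivial since one simply restricts attention to finite problems with $\situations = \Situations$; and (iii)~each of~\ref{item:NSDcomponentshrink} and~\ref{item:NWDcomponentshrink} forces every $\phi_s$ to be convex and strictly increasing. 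Then $\text{[cvx \& str.\ incr.]} \Rightarrow \ref{item:shrink1} \Rightarrow \ref{item:NSDcomponentshrink} \Rightarrow \text{[cvx \& str.\ incr.]}$ and $\text{[cvx \& str.\ incr.]} \Rightarrow \ref{item:shrink2} \Rightarrow \ref{item:NWDcomponentshrink} \Rightarrow \text{[cvx \& str.\ incr.]}$, which chains all five together.

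Step (i) is the short argument sketched after \Cref{thm:maintheorem}. If every $\phi_s$ is convex, then in any finite problem, for any situation $s$, object $x$ and mixture $p$, Jensen's inequality gives $\sum_{y} \phi_s(U(y,s)) p(y) \geq \phi_s\bigl( \sum_y U(y,s) p(y) \bigr)$, and strict monotonicity of $\phi_s$ then upgrades $\sum_y U(y,s) p(y) \geq U(x,s)$ to $\sum_y \phi_s(U(y,s))p(y) \geq \phi_s(U(x,s))$, and likewise with both inequalities strict. Hence any $p$ that weakly, resp.\ strictly, dominates $x$ under $U$ does so also under $\Phi \circ U$, so the undominated sets only shrink, giving~\ref{item:shrink1} and~\ref{item:shrink2}. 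Step (ii) needs no comment.

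Step (iii) is the substance. \emph{Strict monotonicity.} Suppose some $\phi_s$ fails it, say $\phi_s(a) \geq \phi_s(b)$ with $a < b$. For~\ref{item:NWDcomponentshrink}, take two objects $x,y$ with $U(y,s) = b > a = U(x,s)$ and $U(x,\cdot) = U(y,\cdot)$ constant across every other situation; then $\delta_y$ weakly dominates $x$ under $U$, whereas under $\Phi \circ U$ no mixture can even weakly beat $x$ in situation $s$ (its situation-$s$ payoff is at most $\phi_s(a) = [\Phi\circ U](x,s)$) and all other situations are ties, so $x$ lies in $\text{NWD}(\objects,\Situations,\Phi\circ U) \setminus \text{NWD}(\objects,\Situations,U)$, contradicting~\ref{item:NWDcomponentshrink}; for~\ref{item:NSDcomponentshrink} the same works with $U(x,\cdot) \equiv a$ and $U(y,\cdot) \equiv b$. \emph{Convexity.} Now suppose every $\phi_s$ is strictly increasing but some $\phi_s$ is not convex. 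Since $\phi_s$ is monotone it is Lebesgue-measurable, so \hyperref[theorem:sierpinski]{Sierpiński's theorem} furnishes $a < b$ with $\phi_s\bigl( \tfrac{a+b}{2} \bigr) > \tfrac12 \phi_s(a) + \tfrac12 \phi_s(b)$; put $c = \tfrac{a+b}{2}$. I would construct a three-object problem on $\{x',y',z'\}$ in which situation $s$ carries payoffs $U(y',s) = a$, $U(z',s) = b$, $U(x',s) = c$, so that the even mixture $p_0 = \tfrac12 \delta_{y'} + \tfrac12 \delta_{z'}$ ties $x'$ in situation $s$ under $U$ but, by the chord inequality, falls strictly below $x'$ there under $\Phi \circ U$. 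A second, ``trade-off'' situation $t$ is added, in which $z'$ is strictly worse and $y'$ strictly better than $x'$, with the three $t$-payoffs placed inside a short interval $[\alpha,\beta]$ on which the ratio of the two increments of $\phi_t$ over the halves of $[\alpha,\beta]$ is close to $1$, and with $U(x',t)$ just below the midpoint of $[\alpha,\beta]$ so that $p_0$ strictly beats $x'$ in situation $t$ under $U$; the remaining situations get constant payoffs (for~\ref{item:NWDcomponentshrink}) or near-constant payoffs on which $p_0$ strictly beats $x'$ (for~\ref{item:NSDcomponentshrink}, together with a small downward perturbation of $U(x',s)$). One then checks that $p_0$ weakly, resp.\ strictly, dominates $x'$ under $U$, while under $\Phi \circ U$ any mixture that weakly beats $x'$ in situation $s$ must --- because $\phi_s(c) - \phi_s(a) > \phi_s(b) - \phi_s(c)$ --- put strictly more weight on $z'$ than $p_0$ does, and every such mixture then loses to $x'$ in situation $t$; hence $x'$ lies in $\text{NWD}(\objects,\Situations,\Phi\circ U) \setminus \text{NWD}(\objects,\Situations,U)$ (resp.\ the $\text{NSD}$ analogue), the desired contradiction.

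The step I expect to be the main obstacle is exactly this last construction, specifically designing the trade-off situation against an \emph{arbitrary} strictly increasing $\phi_t$, which may be discontinuous or even singular. What is needed is a short interval on which $\phi_t$'s increments over the two halves are in a ratio below the quantitative slack $\bigl[\phi_s(c) - \phi_s(a)\bigr]/\bigl[\phi_s(b) - \phi_s(c)\bigr] > 1$ afforded by midpoint non-convexity. This is a self-contained real-analysis fact, which I would argue by noting that if $\phi_t$ were ``right-heavy'' over every interval --- i.e.\ $\phi_t(m) - \phi_t(m-h) < \phi_t(m+h) - \phi_t(m)$ for all $m,h$ --- then $\phi_t$ would be midpoint-convex, hence convex, hence locally almost linear away from its countably many kinks, so that small intervals there do the job; and otherwise some interval has its left half at least as heavy as its right, which again works after nudging the splitting point toward the midpoint (taking care to put the midpoint among the co-countably many continuity points of $\phi_t$). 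A subsidiary point of care is to choose the triple $(a,c,b)$ so that the slack is large enough to exploit, using the freedom to move the endpoints within the region on which $\phi_s$ exceeds its chord. Once the construction is in place the two triangles close and all five statements are equivalent.
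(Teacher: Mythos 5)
Your skeleton (two implication triangles through the vertex ``each $\phi_s$ convex and strictly increasing''), your Jensen/Pratt sufficiency step, and your strict-monotonicity constructions all match the paper. Where you genuinely diverge is the necessity of convexity. The paper never attacks it directly: \Cref{lem:situation-wisecont} first shows that \ref{item:NSDcomponentshrink} or \ref{item:NWDcomponentshrink} forces each $\phi_s$ to be \emph{continuous} (a three-object, two-situation construction around a putative jump), and \Cref{lem:situation-wiseun} then inverts $\Phi$ and feeds $\Phi^{-1}$ into the expansion-necessity lemmas (\Cref{lemma:NSDcomponentonly,lemma:NWDcomponentonly}), which are deliberately stated for functions on arbitrary open intervals $I_s$ precisely so that concavity of $\phi_s^{-1}$, hence convexity of $\phi_s$, drops out with no new analysis. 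You instead build a contraction-violating problem directly, whose engine is an unproved quantitative lemma: for every strictly increasing $\phi_t$ and every $\lambda^\ast>1/2$ there exist $u<w<v$ with $w<\tfrac{u+v}{2}$ and $\phi_t(w)>\lambda^\ast\phi_t(u)+(1-\lambda^\ast)\phi_t(v)$. That lemma is true and plays the role \Cref{lemma:cs} plays on the expansion side, but your sketch of it is incomplete: in your ``otherwise'' case the configuration witnessing failure of midpoint convexity may have its midpoint at a discontinuity of $\phi_t$, and you cannot simply choose it to be a continuity point. The repair is to treat jumps as a separate, easy case (an interval whose left half straddles a jump has left-half climb fraction tending to one as the interval shrinks) and run your dichotomy only for continuous $\phi_t$; also, the ``large enough slack'' worry is unnecessary, since in the convex case near-linearity at a point of positive derivative beats any threshold strictly below $1/2$.

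The second concrete gap is in your NSD branch. Breaking the tie in situation $s$ by ``a small downward perturbation of $U(x',s)$'' can fail: if $\phi_s$ jumps at $c=\tfrac{a+b}{2}$ with $\phi_s(c-)\le\tfrac12\bigl(\phi_s(a)+\phi_s(b)\bigr)$, then for every $\eta>0$ the weight on $z'$ needed to beat $x'$ in situation $s$ after the transformation falls to $1/2$ or below, the trade-off in situation $t$ no longer blocks, and the problem you build does not contradict \ref{item:NSDcomponentshrink}. A clean fix is to leave $U(x',s)=c$ untouched and instead tilt the dominating mixture, placing weight $\tfrac12+\epsilon$ on $z'$: strictness in $t$ and in the auxiliary situations survives small $\epsilon$, so $x'$ is strictly dominated under $U$, while the blocking argument under $\Phi\circ U$ is exactly the one from your NWD case. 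With these two repairs your route closes and is a legitimate alternative; the trade-off is that you avoid the paper's continuity lemma and inversion trick at the cost of a new real-analysis lemma and a more delicate construction, whereas the paper recycles its interval-version expansion results and needs nothing new beyond continuity.
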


\Cref{proposition:dom_component2} is similar to \Cref{proposition:dom_component}, and is true for similar reasons. Strict monotonicity is required in \Cref{proposition:dom_component2} since any weakly-but-not-strictly increasing transformation sometimes makes a previously dominated object undominated.

\begin{remark}
By \hyperref[theorem:pratt]{Pratt's theorem}, the condition in \Cref{proposition:dom_component2} is equivalent to reducing risk-aversion in every situation. By another theorem of ours \parencite[Theorem~1 in][]{CurelloSinanderWhitmeyer2025}, reduced risk-aversion is behaviourally equivalent to the decision-maker gaining access to a (possibly random) outside option. It follows that dominated choices (in a decision problem or a game) can never be explained by the presence of an unobserved outside option: any choice that is dominated in the decision-maker's problem exclusive of the outside option, $(\objects,\situations,U)$, is also dominated in the `true' problem $(\objects,\situations,\Phi \circ U)$ that accounts for the outside option.
\end{remark}

\section{Applications}
\label{sec:appl}

In this section, we apply our results to Pareto frontiers and to undominated (and rationalisable) actions in games.

\subsection{Pareto dominance}
\label{sec:appl:pareto}

Fix a population $\Situations \coloneqq \{1,2,\dots,n\}$ of $n \geq 2$ individuals. An \emph{economy} is a pair $(\objects,U)$, where $\objects$ is a non-empty finite set of \emph{allocations} and $U : \objects \times \Situations \to \R$ encodes payoffs: $U(\cdot,i) : \objects \to \R$ is the payoff function of individual $i \in \Situations$. We assume expected utility, so that in an economy $(\objects,U)$, the payoff of individual $i \in \Situations$ from a mixture $p \in \Delta(\objects)$ is $\sum_{x \in \objects} U(x,i) p(x)$.

The \emph{Pareto frontier} $\text{PF}_\Situations(\objects,U)$ of an economy $(\objects,U)$ is the set of all allocations such that no mixture is weakly preferred by all individuals and strictly preferred by some; in other words, $\text{PF}_\Situations(\objects,U) \coloneqq \text{NWD}(\objects,\Situations,U)$. More generally, when only a (non-empty) sub-population $\situations \subseteq \Situations$ is considered, the Pareto frontier of economy $(\objects,U)$ is $\text{PF}_\situations(\objects,U) \coloneqq \text{NWD}(\objects,\situations,U)$.

\begin{remark}
	\label{remark:pareto_pearce}
    For any economy $(\objects,U)$ and sub-population $\situations \subseteq \Situations$, by the `cautious' version of Pearce's lemma \parencite[see e.g.][Theorem~1.7]{Myerson1991}, an allocation $x \in \objects$ belongs to $\text{PF}_\situations(\objects,U)$ if and only if there exist strictly positive weights $(\lambda_i)_{i \in \situations}$ such that $x \in \argmax_{y \in \objects} \sum_{i \in \situations} \lambda_i U(y,i)$.%
		\footnote{Curiously, this characterisation of $\text{NWD}$ does not extend straightforwardly to infinite sets $\objects$ of objects/allocations \parencite[see][]{CheKimKojimaRyan2024}. By contrast, the familiar (`incautious') Pearce's lemma for $\text{NSD}$
        \parencite[see e.g.][Theorem~1.6]{Myerson1991} extends directly to infinite $\objects$ given some compactness and upper semi-continuity.}
    In other words, membership of the Pareto frontier is equivalent to being the optimal choice of a utilitarian planner, for some Pareto weights. This motivates interest in the Pareto frontier as defined above (the \emph{deterministic} allocations that are not dominated by any \emph{mixture} over allocations). One could of course also consider an alternative definition in which only dominance by \emph{allocations} is contemplated; we do this in \cref{sec:pure} below.
\end{remark}

\Cref{thm:maintheorem} characterises which payoff transformations generate expansion or contraction of the Pareto frontier $\text{PF}_\situations(\objects,U)$ of every economy $(\objects,U)$ for every sub-population $\situations \subseteq \Situations$. One upshot is that such transformations must be individual-by-individual, with no scope for one individual's preferences to influence how another individual's preferences are transformed.

This has some striking implications. For example, it is natural to conjecture that increased alignment of interests among individuals in the population will contract the Pareto frontier. \Cref{thm:maintheorem} implies that this conjecture is false, at least when `increased alignment' is formalised as payoffs $U : \objects \times \Situations \to \R$ shifting to $\Phi \circ U : \objects \times \Situations \to \R$ where $\Phi : \R^\Situations \to \R^\Situations$ is weakly increasing, so that (ceteris paribus) each individual likes an object $x \in \objects$ more the more it is liked by others. Indeed, situation-wiseness requires that $u_s \mapsto \Phi_t((u_r)_{r \in \Situations})$ be constant for all $s \neq t$ in $\Situations$, meaning that alignment remains unchanged (neither strictly increases nor strictly decreases).

The other upshot of \Cref{thm:maintheorem} is that (by \hyperref[theorem:pratt]{Pratt's theorem}) contraction requires each individual to become less risk-averse, while (neglecting some exceptional cases) expansion requires that each individual become more risk-averse.

Depending on the application, the population may simply be fixed, in which case it is not natural to consider sub-populations. \Cref{proposition:dom_component,proposition:dom_component2} characterise, among individual-by-individual transformations, which ones expand or contract the full-population Pareto frontier $\text{PF}_\Situations(\objects,U)$ of every economy $(\objects,U)$. Again, the gist of the conditions is that expansion (contraction) requires individuals to become more (less) risk-averse.

\subsection{Strategic dominance in games}
\label{sec:appl:games}

Fix a non-empty set $I$ of players, with $\abs*{I} \geq 2$. We consider all finite normal-form games played by $I$, meaning all tuples $(I,(A_i,u_i)_{i \in I})$ such that for each $i \in I$, $A_i$ is a non-empty finite set and $u_i$ is a map $A \to \R$, where $A \coloneqq \prod_{j \in I} A_j$. The interpretation is that players simultaneously choose actions, and that for each player~$i \in I$, her available actions are $A_i$ and her payoff is $u_i(a)$, where $a \in A$ is the profile of all players' actions. Weak and strict dominance coincide with the familiar game-theoretic concepts bearing those names: player~$i$'s not-weakly-dominated actions are $\text{NWD}(A_i,A_{-i},u_i)$, and her not-strictly-dominated actions are $\text{NSD}(A_i,A_{-i},u_i)$, where $A_{-i} \coloneqq \prod_{j \in I \setminus \{i\}} A_j$.

We contemplate general transformations $\Phi_i : \R^A \to \R^A$ of a player~$i$'s payoff. By \Cref{thm:maintheorem}, the transformations $\Phi_i$ which contract or expand $\text{NWD}(A_i,A_{-i},u_i)$ or $\text{NSD}(A_i,A_{-i},u_i)$ in every game $(I,(A_i,u_i)_{i \in I})$ must transform player~$i$'s payoffs profile-by-profile: for any profile $a_{-i} \in A_{-i}$ of opponents' actions, player~$i$'s transformed payoffs $(\Phi_i \circ u_i)(\cdot,a_{-i}) : A_i \to \R$ against a profile $a_{-i} \in A_{-i}$ can depend only on the pre-transformation payoffs $u_i(\cdot,a_{-i})$ \emph{against that same profile $a_{-i}$,} not on the pre-transformation payoffs $u_i(\cdot,b_{-i})$ against any other profile $b_{-i} \in A_{-i} \setminus \{a_{-i}\}$. In terms of the payoff matrix of a two-player game, for the row player, this means a column-by-column transformation, with the transformed entries in each column depending only on the pre-transformed entries in that same column.

\Cref{thm:maintheorem} furthermore asserts that if a transformation is to contract player~$i$'s undominated sets, then (by \hyperref[theorem:pratt]{Pratt's theorem}) it must be that against each profile $a_{-i} \in A_{-i}$ of opponents' actions, player~$i$'s preferences over (mixtures over) her own actions becomes less risk-averse. Similarly, expansion requires---degenerate transformations aside---that player~$i$ become more risk-averse, profile by profile.

For each player $i \in I$ and integer $n \in \N$, let $A_i^n \coloneqq \text{NSD}(A_i^{n-1},A_{-i}^{n-1},u_i)$, where for each $i \in I$, $A_{-i}^n \coloneqq \prod_{j \in I \setminus \{i\}} A_j^n$ for all $n \in \N$, $A_i^0 \coloneqq A_i$ and $A_{-i}^0 \coloneqq A_{-i}$. The rationalisable action profiles are exactly $\prod_{i \in I} \bigcap_{n \in \N} A_i^n$. By applying \Cref{thm:maintheorem} iteratively for each $n \in \N$, we see that when all players become more (less) risk-averse in a profile-by-profile fashion, the set of rationalisable action profiles expands (shrinks).

A special case is when player~$i$'s payoff is transformed in a profile-independent way: that is, $\Phi_i$ is not only $(\phi_i^{a_{-i}})_{a_{-i} \in A_{-i}}$-situation-wise, but there is a single function $\phi_i : \R \to \R$ such that $\phi_i^{a_{-i}} = \phi_i$ for every profile $a_{-i} \in A_{-i}$. In this special case, our comparative-statics questions become:

\begin{enumerate}[label=(\arabic*$^\dag$)]

	\item \label{item:NSD} $\text{NSD}(A_i,A_{-i},u_i) \subseteq \text{NSD}(A_i,A_{-i},\phi_i \circ u_i)$ for every player $i \in I$ and every finite game $(I,(A_j,u_j)_{j \in I})$.

	\item \label{item:NWD} $\text{NWD}(A_i,A_{-i},u_i) \subseteq \text{NWD}(A_i,A_{-i},\phi_i \circ u_i)$ for every player $i \in I$ and every finite game $(I,(A_j,u_j)_{j \in I})$.

    \item \label{item:NSDflip} $\text{NSD}(A_i,A_{-i},\phi_i \circ u_i) \subseteq \text{NSD}(A_i,A_{-i},u_i)$ for every player $i \in I$ and every finite game $(I,(A_j,u_j)_{j \in I})$.

	\item \label{item:NWDflip} $\text{NWD}(A_i,A_{-i},\phi_i \circ u_i) \subseteq \text{NWD}(A_i,A_{-i},u_i)$ for every player $i \in I$ and every finite game $(I,(A_j,u_j)_{j \in I})$.

\end{enumerate}
The following corollary of \Cref{proposition:dom_component,proposition:dom_component2} gives necessary and sufficient conditions for comparative statics in this special case. The `if' half of part~\ref{item:NSD} recovers the result of \textcite{Weinstein2016,BattigalliEtal2016}; the rest is (presumably) new.

\begin{corollary}
	\label{corollary:dom_mcs_charac}
	Assume that $2 \leq \abs*{I} < \infty$, and let $(\phi_i)_{i \in I}$ be maps $\R \to \R$.

	\begin{enumerate}[label=(\alph*)]
	
		\item \label{dom_mcs_charac:nsd} Property~\ref{item:NSD} holds if and only if for each player $i \in I$, $\phi_i$ is concave and weakly increasing.

		\item \label{dom_mcs_charac:nwd} Property~\ref{item:NWD} holds if and only if for each player $i \in I$, $\phi_i$ is either (i)~concave and strictly increasing or (ii)~constant.

        \item \label{dom_mcs_charac:flip} Property~\ref{item:NSDflip} holds if and only if property~\ref{item:NWDflip} holds, which holds if and only if for each player $i \in I$, $\phi_i$ is convex and strictly increasing.
	
	\end{enumerate}
\end{corollary}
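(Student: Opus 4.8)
The plan is to derive the corollary from \Cref{proposition:dom_component,proposition:dom_component2} by a reduction, together with a direct check of a degenerate case.

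First I would set up the reduction, fixing a player $i \in I$. A profile-independent transformation $u_i \mapsto \phi_i \circ u_i$ is exactly the $(\phi_i^{a_{-i}})_{a_{-i} \in A_{-i}}$-situation-wise transformation on $\R^{A_{-i}}$ in which every $\phi_i^{a_{-i}}$ equals $\phi_i$, the situation set being $A_{-i}$. As the game ranges over all finite games played by $I$, the set $A_{-i} = \prod_{j \neq i} A_j$ ranges---up to relabelling---over all non-empty finite sets: for a target cardinality $n \geq 1$, give one opponent $n$ actions and every other opponent a single action. Moreover $u_i$ may be chosen arbitrarily, and the payoffs $(u_j)_{j \neq i}$ are irrelevant to $\text{NSD}(A_i,A_{-i},u_i)$ and $\text{NWD}(A_i,A_{-i},u_i)$. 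Hence, for player~$i$, properties~\ref{item:NSD}, \ref{item:NWD}, \ref{item:NSDflip} and \ref{item:NWDflip} say exactly that for every non-empty finite situation set $\Situations$ the all-$\phi_i$ situation-wise transformation on $\R^\Situations$ enjoys, respectively, properties~\ref{item:NSDcomponent}, \ref{item:NWDcomponent}, \ref{item:NSDcomponentshrink} and \ref{item:NWDcomponentshrink}. (It is these ``full-$\Situations$'' component-wise properties, not the ``all subsets'' properties~\ref{item:maintheorem1}--\ref{item:shrink2}, that matter, because in a game the opponent-profile set $A_{-i}$ is never subsetted.)

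Next I would read off the conditions for $\abs{\Situations} \geq 2$ by applying \Cref{proposition:dom_component,proposition:dom_component2} with $\phi_s \coloneqq \phi_i$ for every $s \in \Situations$. \Cref{proposition:dom_component}\ref{dom_component:nsd} gives property~\ref{item:NSDcomponent} $\iff$ [$\phi_i$ concave and weakly increasing, or $\phi_i$ constant] $\iff$ [$\phi_i$ concave and weakly increasing], since constancy is a special case. \Cref{proposition:dom_component}\ref{dom_component:nwd} gives property~\ref{item:NWDcomponent} $\iff$ [$\phi_i$ concave and strictly increasing, or $\phi_i$ constant]. \Cref{proposition:dom_component2} gives properties~\ref{item:NSDcomponentshrink} and \ref{item:NWDcomponentshrink} each $\iff$ [$\phi_i$ convex and strictly increasing]. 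Then I would dispose of $\abs{\Situations}=1$ directly: with a single situation $s$, both $\text{NSD}(\objects,\{s\},U)$ and $\text{NWD}(\objects,\{s\},U)$ equal $\argmax_{x \in \objects} U(x,s)$ (an expectation never exceeds the corresponding maximum), so the all-$\phi_i$ transformation has property~\ref{item:NSDcomponent}/\ref{item:NWDcomponent} iff $\phi_i$ is weakly increasing, and property~\ref{item:NSDcomponentshrink}/\ref{item:NWDcomponentshrink} iff $\phi_i$ is strictly increasing---each implied by the corresponding $\abs{\Situations}\geq 2$ condition. Thus the $\abs{\Situations}\geq 2$ conditions are the binding ones; quantifying over $i \in I$ then yields parts~\ref{dom_mcs_charac:nsd}, \ref{dom_mcs_charac:nwd} and \ref{dom_mcs_charac:flip} (and in particular the equivalence \ref{item:NSDflip} $\iff$ \ref{item:NWDflip}).

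I do not expect a genuine obstacle, since the substance lies in \Cref{proposition:dom_component,proposition:dom_component2}. The points needing care are the three bookkeeping subtleties above: identifying the correct ``partially robust'' property behind each game-theoretic statement; separately treating single-opponent-profile games, which the propositions (assuming $\abs{\Situations}\geq 2$) do not cover; and noticing that a constant $\phi_i$ folds into ``concave and weakly increasing'' in part~\ref{dom_mcs_charac:nsd} but must remain a separate alternative in part~\ref{dom_mcs_charac:nwd}.
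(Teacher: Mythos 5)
Your proposal is correct and follows essentially the route the paper intends: \Cref{corollary:dom_mcs_charac} is read off from \Cref{proposition:dom_component,proposition:dom_component2} by viewing the profile-independent transformation as the all-$\phi_i$ situation-wise transformation with situation set $A_{-i}$, which can be any non-empty finite set. Your explicit treatment of the $\abs{A_{-i}}=1$ games (where $\text{NSD}=\text{NWD}=\argmax$ and only monotonicity bites) is a bookkeeping detail the paper leaves implicit, and it is handled correctly.
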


\section{Dominance by objects (rather than mixtures)}
\label{sec:pure}

Our results in the previous sections concerned the standard definition of `undominated', which deems an object undominated only if it is not dominated by any \emph{mixture} over objects. An alternative definition \parencite[in the spirit of][]{Borgers1993} worries only about dominance by other \emph{objects:}
\begin{align*}
    \widehat{\text{NSD}}(\objects,\situations,U) &\coloneqq
    \bigl\{
    x \in \objects :
    \text{there is no $y \in \objects$ such that}
    \\
    & \qquad \qquad \qquad
    \text{$U(y,s) > U(x,s)$ for every $s \in \situations$}
    \bigr\}
    \\
	\text{and} \quad
    \widehat{\text{NWD}}(\objects,\situations,U) &\coloneqq
    \bigl\{
    x \in \objects : \text{there is no $y \in \objects$ such that}
    \\
    {}& \qquad \qquad \qquad
    \text{$U(y,s) \geq U(x,s)$ for every $s \in \situations$ and}
    \\
    {}& \qquad \qquad \qquad
    \text{$U(y,s) > U(x,s)$ for some $s \in \situations$}
    \bigr\} .
\end{align*}

In this section, we study comparative-statics for these alternative notions of undominatedness. Concretely, we consider the following properties which a given payoff transformation $\Phi : \R^\Situations \to \R^\Situations$ may or may not satisfy:

\begin{enumerate}[label=($\hspace{0.1em}\widehat{\arabic*}\hspace{0.1em}$)]
\item \label{propertynpsd1} $\widehat{\text{NSD}}(\objects,\situations,U) \subseteq \widehat{\text{NSD}}(\objects,\situations,\Phi \circ U)$ for every finite problem $(\objects,\situations,U)$.
\item \label{propertynpsd2} $\widehat{\text{NWD}}(\objects,\situations,U) \subseteq \widehat{\text{NWD}}(\objects,\situations,\Phi \circ U)$ for every finite problem $(\objects,\situations,U)$.
\item \label{propertynpsd3} $\widehat{\text{NSD}}(\objects,\situations,\Phi \circ U) \subseteq \widehat{\text{NSD}}(\objects,\situations,U)$ for every finite problem $(\objects,\situations,U)$.
\item \label{propertynpsd4} $\widehat{\text{NWD}}(\objects,\situations,\Phi \circ U) \subseteq \widehat{\text{NWD}}(\objects,\situations,U)$ for every finite problem $(\objects,\situations,U)$.
\end{enumerate}
For each of these four properties, our question is which transformations $\Phi$ enjoy that property. The following theorem provides the answer.

\begin{theorem}\label{theorem:psd}
	Assume that $2 \leq \abs*{\Situations} < \infty$, and consider a transformation $\Phi : \R^\Situations \to \R^\Situations$.

	\begin{enumerate}[label=(\alph*)]
	
		\item \label{npsd1} Property~\ref{propertynpsd1} holds if and only if there exists a family $(\phi_s)_{s \in \Situations}$ of weakly increasing functions $\R \to \R$ such that $\Phi$ is $(\phi_s)_{s \in \Situations}$-situation-wise.

		\item \label{npsd2} Property~\ref{propertynpsd2} holds if and only if there exists a family $(\phi_s)_{s \in \Situations}$ of functions $\R \to \R$ such that $\Phi$ is $(\phi_s)_{s \in \Situations}$-situation-wise and either (i)~$\phi_s$ is strictly increasing for each $s \in \Situations$ or (ii)~$\phi_s$ is constant for each $s \in \Situations$.

        \item \label{npsd3} Property~\ref{propertynpsd3} holds if and only if for all $u,v \in \R^\Situations$ and $s \in \Situations$, $u_s > v_s$ implies $\Phi_s(u) > \Phi_s(v)$.
	    
        \item \label{npsd4} Property~\ref{propertynpsd4} holds if and only if for all $u,v \in \R^\Situations$, both of the following hold: (i)~for each $s \in \Situations$, $u_s > v_s$ implies $\Phi_s(u) > \Phi_s(v)$, and (ii)~$u > v$ implies $\Phi(u) \geq \Phi(v)$.
        
	\end{enumerate}
\end{theorem}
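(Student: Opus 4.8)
Since the hatted solution concepts $\widehat{\text{NSD}}$ and $\widehat{\text{NWD}}$ involve only domination of one \emph{object} by another, the plan is to carry out all the work in \emph{two-object} problems $(\{x,y\},\situations,U)$ with $U(x,\cdot),U(y,\cdot)$ chosen freely in $\R^\Situations$ and with $\situations$ taken to be either a singleton $\{s\}$ or a pair $\{s,t\}$. The ``if'' directions are transfer arguments: a monotonicity property of $\Phi$ lets one carry the coordinatewise domination relation \emph{backward} (from $\Phi\circ U$ to $U$, yielding the expansion properties) or \emph{forward} (from $U$ to $\Phi\circ U$, yielding the contraction properties). The ``only if'' directions extract the stated property of $\Phi$ from a single carefully chosen small problem.

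For part~(a): if $\Phi$ is $(\phi_s)_{s\in\Situations}$-situation-wise with every $\phi_s$ weakly increasing, then $\phi_s(b)>\phi_s(a)$ forces $b>a$, so any $y$ that strictly dominates $x$ in $(\objects,\situations,\Phi\circ U)$ already did so in $(\objects,\situations,U)$; contraposition gives the expansion of $\widehat{\text{NSD}}$. Conversely, situation-wiseness is forced exactly as in the footnotes to \Cref{thm:maintheorem} (take $\situations=\{s\}$ and $U(x,s)=U(y,s)$: both $x$ and $y$ lie in $\widehat{\text{NSD}}$ and must remain there, which forces $\Phi_s(U(x,\cdot))=\Phi_s(U(y,\cdot))$, so $\Phi_s$ depends only on coordinate~$s$); and a violation of weak monotonicity, $a<b$ with $\phi_s(a)>\phi_s(b)$, would turn the object with value~$b$ (undominated before) into a strictly dominated one after, again in a two-object problem with $\situations=\{s\}$. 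Part~(c) is shorter still: strict situation-wise monotonicity transfers strict domination forward coordinate by coordinate, so $\widehat{\text{NSD}}$ shrinks; conversely, given $u_s>v_s$, in the two-object problem with $\situations=\{s\}$, $U(y,\cdot)=u$, $U(x,\cdot)=v$ one has $\widehat{\text{NSD}}=\{y\}$, so after applying $\Phi$ the object~$x$ must be strictly dominated (necessarily by~$y$), i.e.\ $\Phi_s(u)>\Phi_s(v)$.

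Part~(b) is the most involved. For ``if'', case~(i) (every $\phi_s$ strictly increasing) makes $\phi_s(b)\geq\phi_s(a)\iff b\geq a$ and thus transfers weak domination backward; case~(ii) (every $\phi_s$ constant) collapses $\Phi\circ U$ so that $\widehat{\text{NWD}}(\objects,\situations,\Phi\circ U)=\objects$. For ``only if'', situation-wiseness and weak monotonicity of each $\phi_s$ follow as in part~(a), with $\widehat{\text{NWD}}$ in place of $\widehat{\text{NSD}}$. The crux is the dichotomy: fix $s\neq t$ and suppose $\phi_s$ is \emph{not} constant, say $\phi_s(a')<\phi_s(b')$ with $a'<b'$, while $\phi_t$ is \emph{not} strictly increasing, so (by weak monotonicity) $\phi_t(c')=\phi_t(d')$ for some $c'<d'$; in the two-object problem with $\situations=\{s,t\}$, $U(x,s)=a'$, $U(y,s)=b'$, $U(x,t)=d'$, $U(y,t)=c'$, the object~$x$ is not weakly dominated before (since $U(y,t)=c'<d'=U(x,t)$) but \emph{is} weakly dominated by~$y$ after (equal at~$t$, strictly better at~$s$), contradicting the $\widehat{\text{NWD}}$-expansion property. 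Hence for every ordered pair $s\neq t$, either $\phi_s$ is constant or $\phi_t$ is strictly increasing; since $\abs*{\Situations}\geq2$, this forces either that every $\phi_s$ is strictly increasing, or --- as soon as some $\phi_r$ fails to be strictly increasing --- that every $\phi_s$ with $s\neq r$ is constant and then, reusing the condition with the roles reversed, that $\phi_r$ is constant too.

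For part~(d): condition~(i) is necessary by the same singleton construction as in part~(c) (with $\widehat{\text{NWD}}$). Condition~(ii) is necessary because, if $u_s=v_s$ and $\Phi_s(u)>\Phi_s(v)$ but $u\not>v$, then $u\neq v$ forces $u_t<v_t$ for some $t\neq s$, and in the two-object problem with $\situations=\{s,t\}$, $U(x,\cdot)=v$, $U(y,\cdot)=u$, the object~$y$ is weakly dominated by~$x$ before (equal at~$s$, strictly worse at~$t$) but not after (since $\Phi_s(v)<\Phi_s(u)$), so $y$ enters $\widehat{\text{NWD}}$ upon applying $\Phi$, contradicting the $\widehat{\text{NWD}}$-contraction property. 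For ``if'': if $y$ weakly dominates $x$ in $(\objects,\situations,U)$, with $U(y,s)\geq U(x,s)$ on $\situations$ and strict at some $t\in\situations$, then at coordinates $s$ where $U(y,s)>U(x,s)$ condition~(i) gives $\Phi_s(U(y,\cdot))>\Phi_s(U(x,\cdot))$; at coordinates $s$ where $U(y,s)=U(x,s)$, if we had $\Phi_s(U(x,\cdot))>\Phi_s(U(y,\cdot))$ then condition~(ii) applied to the pair $(U(x,\cdot),U(y,\cdot))$ would yield $U(x,\cdot)>U(y,\cdot)$, contradicting $U(y,t)>U(x,t)$; and condition~(i) preserves the strict inequality at~$t$. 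Thus $y$ still weakly dominates $x$ after $\Phi$, and $\widehat{\text{NWD}}$ contracts. I expect the combinatorial dichotomy in part~(b) --- pairing a ``non-constant'' situation with a ``non-strictly-increasing'' situation and then propagating constancy to all of $\Situations$ --- to be the main obstacle, with the ``if'' direction of part~(d) a close second, since there condition~(ii) must be deployed as a kind of ``weak monotonicity at ties'' and one has to verify that the strictly-improving coordinate is never destroyed.
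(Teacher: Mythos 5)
Your proposal is correct and follows essentially the same route as the paper: sufficiency by coordinatewise monotonicity transfer (backward for expansion, forward for contraction), and necessity via two-object problems with singleton or two-element situation sets — the same constructions the paper uses (its situation-wiseness and weak-monotonicity counterexamples, and case~2 of the lemma behind \Cref{proposition:dom_component}\ref{dom_component:nwd} for the strictly-increasing-versus-constant dichotomy in part~(b), which you simply reorganise as a pairwise condition propagated across $\Situations$). The details you spell out (including the use of condition~(ii) in part~(d)) match what the paper treats as immediate.
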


The first two parts are straightforward to show, and the latter two are immediate. Note that situation-wiseness remains necessary for expansion, but not for contraction, as illustrated by the following example.

\begin{example}
\label{example:pure_notsituationwise}
Let $\Situations=\{1,2\}$, and let $\Phi : \R^2 \to \R^2$ be given by
\begin{equation*}
    \Phi(k,\ell) \coloneqq
    \begin{cases}
    (k,\ell) & \text{if $k<0$ or if $k=0$ and $\ell\le 0$}\\
    (k+1,\ell) & \text{if $k>0$ or if $k=0$ and $\ell>0$.}
    \end{cases}
\end{equation*}
By \Cref{theorem:psd}, $\Phi$ satisfies properties~\ref{propertynpsd3} and \ref{propertynpsd4}. It is not situation-wise.
\end{example}

In spite of \Cref{theorem:psd}, situation-wiseness \emph{is} necessary for contraction when attention is restricted to moderately well-behaved transformations $\Phi$:

\begin{observation}\label{prop:finalprop}
Assume that $2 \leq \abs*{\Situations} < \infty$, and consider a \emph{continuous} transformation $\Phi : \R^\Situations \to \R^\Situations$. Assume that for all $u,v \in \R^\Situations$ and $s \in \Situations$, $u_s > v_s$ implies $\Phi_s(u) > \Phi_s(v)$. Then there exists a family $(\phi_s)_{s \in \Situations}$ of functions $\R \to \R$ such that $\Phi$ is $(\phi_s)_{s \in \Situations}$-situation-wise and $\phi_s$ is strictly increasing for each $s \in \Situations$.
\end{observation}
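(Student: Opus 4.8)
The plan is to show directly that the monotonicity hypothesis, \emph{together with continuity}, forces $\Phi_s(u)$ to depend on $u$ only through its $s$-th coordinate; the functions $\phi_s$ and their strict monotonicity then drop out immediately. Fix a situation $s \in \Situations$, and write $e_s \in \R^\Situations$ for the $s$-th standard basis vector. The crux is to take any two points $u,v \in \R^\Situations$ with $u_s = v_s$ and show $\Phi_s(u) = \Phi_s(v)$. For every $t > 0$, the point $v + t e_s$ has $s$-th coordinate $v_s + t > u_s$, so the hypothesis of the Observation — applied with $v + t e_s$ in the role of ``$u$'' and $u$ in the role of ``$v$'' — yields $\Phi_s(v + t e_s) > \Phi_s(u)$. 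Letting $t \downarrow 0$ and invoking continuity of $\Phi$ (hence of the axis-parallel section $t \mapsto \Phi_s(v + t e_s)$ at $0$) gives $\Phi_s(v) \geq \Phi_s(u)$. Swapping the roles of $u$ and $v$ gives the reverse inequality, so $\Phi_s(u) = \Phi_s(v)$.

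Since $\Phi_s(u)$ therefore depends on $u$ only through $u_s$, one may define $\phi_s : \R \to \R$ by $\phi_s(u_s) \coloneqq \Phi_s(u)$, well-definedness being exactly what was just shown; carrying this out for every $s \in \Situations$ makes $\Phi$ be $(\phi_s)_{s \in \Situations}$-situation-wise. Each $\phi_s$ is strictly increasing: given $c > c'$, pick any $u,v \in \R^\Situations$ with $u_s = c$ and $v_s = c'$, and the hypothesis gives $\phi_s(c) = \Phi_s(u) > \Phi_s(v) = \phi_s(c')$. This completes the argument.

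The one genuinely delicate point is the passage to the limit $t \downarrow 0$: without continuity one cannot deduce $\Phi_s(v) \geq \Phi_s(u)$ from ``$\Phi_s(v + t e_s) > \Phi_s(u)$ for all $t > 0$'', because the axis-parallel section could jump upward at $0^+$ — which is precisely the behaviour of the $\Phi$ in the Example just before the Observation, a transformation that satisfies the monotonicity hypothesis yet fails to be situation-wise. So continuity is used in an essential way, although only continuity along axis-parallel lines is actually needed, and the finiteness of $\Situations$ plays no substantive role beyond making $\R^\Situations$ the natural domain.
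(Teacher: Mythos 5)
Your proof is correct and uses essentially the same argument as the paper: perturb the $s$-th coordinate of one of the two points, apply the strict-monotonicity hypothesis across the two points, and let the perturbation vanish by continuity (the paper sandwiches $\Phi_s(v)$ with a two-sided perturbation of $u$, while you use a one-sided perturbation plus symmetry — an immaterial difference). Your explicit verification that each $\phi_s$ is strictly increasing is a welcome touch the paper leaves implicit.
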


\begin{proof}
Fix any $u,v \in \R^\Situations$ and $s \in \Situations$ such that $u_s = v_s$; we must show that $\Phi_s(u) = \Phi_s(v)$. To that end, for each $\varepsilon>0$, define $\underline{u}^\varepsilon,\overline{u}^\varepsilon \in \R^\Situations$ by $\underline{u}^\varepsilon_s \coloneqq u_s - \varepsilon$, $\overline{u}^\varepsilon_s \coloneqq u_s + \varepsilon$ and $\underline{u}^\varepsilon_t \coloneqq u_t \eqqcolon \overline{u}^\varepsilon_t$ for every $t \in \Situations \setminus \{s\}$. By the monotonicity assumption, $\Phi_s\left(\underline{u}^\varepsilon\right) < \Phi_s(v) < \Phi_s\left(\overline{u}^\varepsilon\right)$ for every $\varepsilon>0$. Since $\Phi$ is continuous, letting $\varepsilon \to 0$ yields $\Phi_s(u) = \Phi_s(v)$, as desired.
\end{proof}


\begin{appendices}

\crefalias{section}{appsec}
\crefalias{subsection}{appsec}
\crefalias{subsubsection}{appsec}

\section{Proof of sufficiency in \texorpdfstring{\Cref{proposition:dom_component}}{Proposition~\ref{proposition:dom_component}}}
\label{proposition:dom_component_suff}

Note that property~\ref{item:maintheorem1} implies property~\ref{item:NSDcomponent}, that property~\ref{item:maintheorem2} implies property~\ref{item:NWDcomponent}, and that property~(i) in \Cref{proposition:dom_component}\ref{dom_component:nsd_main} is the same as property~(i) in \Cref{proposition:dom_component}\ref{dom_component:nsd}. Hence to establish the sufficiency claims in \Cref{proposition:dom_component}, it is enough to show that
\begin{itemize}
\item property~(i) in \Cref{proposition:dom_component}\ref{dom_component:nsd_main}/\ref{dom_component:nsd} implies property~\ref{item:maintheorem1},
\item property~(i) in \Cref{proposition:dom_component}\ref{dom_component:nwd} implies property~\ref{item:maintheorem2},
\item property~(ii) in \Cref{proposition:dom_component}\ref{dom_component:nsd_main} implies property~\ref{item:maintheorem1},
\item property~(ii) in \Cref{proposition:dom_component}\ref{dom_component:nsd} implies property~\ref{item:NSDcomponent}, and
\item property~(ii) in \Cref{proposition:dom_component}\ref{dom_component:nwd} implies property~\ref{item:maintheorem2}.
\end{itemize}

Suppose that property~(i) in \Cref{proposition:dom_component}\ref{dom_component:nsd_main}/\ref{dom_component:nsd} holds: $\phi_s$ is concave and \emph{weakly} increasing for each $s \in \Situations$. We must show that property~\ref{item:maintheorem1} holds. To that end, fix a finite problem $(\objects,\situations,U)$ and any $x \in \objects \setminus \text{NSD}(\objects,\situations,\Phi \circ U)$; we will show that $x \notin \text{NSD}(\objects,\situations,U)$. By hypothesis, there is a mixture $p \in \Delta(\objects)$ such that
\begin{equation*}
    \sum_{y \in \objects} \phi_s(U(y,s)) p(y) > \phi_s(U(x,s)) \quad \text{for every $s \in \situations$.}
\end{equation*}
Hence for each $s \in \situations$, by Jensen's inequality (applicable since $\phi_s$ is concave),
\begin{equation*}
    \phi_s\left( \sum_{y \in \objects} U(y,s) p(y) \right)
    \geq \sum_{y \in \objects} \phi_s\left( U(y,s) \right) p(y)
    > \phi_s(U(x,s)) ,
\end{equation*}
whence $\sum_{y \in \objects} U(y,s) p(y) > U(x,s)$ since $\phi_s$ is weakly increasing.%
    \footnote{A function $f : \R \to \R$ is weakly increasing if and only if $u \leq v$ implies $f(u) \leq f(v)$, which is equivalent (by contra-position) to `$f(u) > f(v)$ implies $u>v$'.}
So $x \notin \text{NSD}(\objects,\situations,U)$.

Suppose that property~(i) in \Cref{proposition:dom_component}\ref{dom_component:nwd} holds: $\phi_s$ is concave and \emph{strictly} increasing for each $s \in \Situations$. We must show that property~\ref{item:maintheorem2} holds. To that end, fix a finite problem $(\objects,\situations,U)$ and any $x \in \objects \setminus \text{NWD}(\objects,\situations,\Phi \circ U)$; we will show that $x \notin \text{NWD}(\objects,\situations,U)$. By hypothesis, there is a mixture $p \in \Delta(\objects)$ such that
\begin{equation*}
    \sum_{y \in \objects} \phi_s(U(y,s)) p(y) \underset{(>)}{\geq} \phi_s(U(x,s)) \quad \text{for $\underset{\text{(some)}}{\text{every}}$ $s \in \situations$.}
\end{equation*}
For each $s \in \situations$, since $\phi_s$ is concave and strictly increasing,
\begin{equation*}
    \sum_{y \in \objects} \phi_s(U(y,s)) p(y)
    \underset{(>)}{\geq} \phi_s(U(x,s))
    \quad \text{implies} \quad
    \sum_{y \in \objects} U(y,s) p(y)
    \underset{(>)}{\geq} U(x,s) 
\end{equation*}
by \hyperref[theorem:pratt]{Pratt's theorem}. Hence $x \notin \text{NWD}(\objects,\situations,U)$.

Suppose that property~(ii) in \Cref{proposition:dom_component}\ref{dom_component:nsd_main} holds: there exists an $s \in \Situations$ such that $\phi_s$ is weakly increasing and for every $t \in \Situations \setminus \{s\}$, $\phi_t$ is constant. Then property~\ref{item:maintheorem1} holds since for any finite problem $(\objects,\situations,U)$, if $\situations \neq \{s\}$ then $\text{NSD}(\objects,\situations,U) \subseteq \objects = \text{NSD}(\objects,\situations,\Phi \circ U)$, while if $\situations = \{s\}$ then for any $x \in \text{NSD}(\objects,\situations,U)$, we have $U(x,s) \geq U(y,s)$ for every $y \in \objects$, which since $\phi_s$ is weakly increasing implies $\phi_s(U(x,s)) \geq \phi_s(U(y,s))$ for every $y \in \objects$, which is to say that $x \in \text{NSD}(\objects,\situations,\Phi \circ U)$.

Suppose that property~(ii) in \Cref{proposition:dom_component}\ref{dom_component:nsd} holds: $\phi_s$ is constant for some $s \in \Situations$. Then property~\ref{item:NSDcomponent} holds since for any finite problem $(\objects,\Situations,U)$, $\text{NSD}(\objects,\Situations,U) \subseteq \objects = \text{NSD}(\objects,\Situations,\Phi \circ U)$.

Suppose that property~(ii) in \Cref{proposition:dom_component}\ref{dom_component:nwd} holds: $\phi_s$ is constant for every $s \in \Situations$. Then property~\ref{item:maintheorem2} holds since for any finite problem $(\objects,\situations,U)$, $\text{NWD}(\objects,\situations,U) \subseteq \objects = \text{NWD}(\objects,\situations,\Phi \circ U)$.
\qed

\section{Proof of sufficiency in \texorpdfstring{\Cref{proposition:dom_component2}}{Proposition~\ref{proposition:dom_component2}}}
\label{proposition:dom_component2_suff}

Suppose that $\phi_s$ is convex and strictly increasing for each $s \in \Situations$; we must show that properties~\ref{item:shrink1} and \ref{item:shrink2} hold. This establishes the sufficiency claims in \Cref{proposition:dom_component2} since property~\ref{item:shrink1} implies property~\ref{item:NSDcomponentshrink} and property~\ref{item:shrink2} implies property~\ref{item:NWDcomponentshrink}.

To establish property~\ref{item:shrink2}, fix a finite problem $(\objects,\situations,U)$ and an $x \in \objects \setminus \text{NWD}(\objects,\situations,U)$; we will show that $x \notin \text{NWD}(\objects,\situations,\Phi \circ U)$. By hypothesis, there is a mixture $p \in \Delta(\objects)$ such that
\begin{equation*}
    \sum_{y \in \objects} U(y,s) p(y) \underset{(>)}{\geq} U(x,s) \quad \text{for $\underset{\text{(some)}}{\text{every}}$ $s \in \situations$.}
\end{equation*}
For each $s \in \situations$, since $\phi_s$ is convex and strictly increasing,
\begin{equation*}
    \sum_{y \in \objects} U(y,s) p(y)
    \underset{(>)}{\geq} U(x,s)
    \quad \text{implies} \quad
    \sum_{y \in \objects} \phi_s(U(y,s)) p(y)
    \underset{(>)}{\geq} \phi_s(U(x,s)) 
\end{equation*}
by \hyperref[theorem:pratt]{Pratt's theorem}. Hence $x \notin \text{NWD}(\objects,\situations,\Phi \circ U)$.

Property~\ref{item:shrink1} follows similarly: for any finite problem $(\objects,\situations,U)$, if $x \in \objects \setminus \text{NSD}(\objects,\situations,U)$, then there is a mixture $p \in \Delta(\objects)$ such that
\begin{equation*}
    \sum_{y \in \objects} U(y,s) p(y) > U(x,s) \quad \text{for every $s \in \situations$,}
\end{equation*}
which by \hyperref[theorem:pratt]{Pratt's theorem} implies
\begin{equation*}
    \sum_{y \in \objects} \phi_s(U(y,s)) p(y) > \phi_s(U(x,s)) \quad \text{for every $s \in \situations$,}
\end{equation*}
whence $x \notin \text{NSD}(\objects,\situations,\Phi \circ U)$.
\qed

\section{A lemma}
\label{proposition:dom_component_lemma}

The following purely technical lemma will be used in the next two appendices to prove the necessity claims in \Cref{proposition:dom_component,proposition:dom_component2}.

\begin{lemma}
	\label{lemma:cs}
    For any $\varepsilon \in (0,\infty)$, any non-empty open interval $I \subseteq \R$, and any weakly increasing and non-constant function $f : I \to \R$, there exist $u>v$ in $I$ and $\alpha \in [1/2,1/2+\varepsilon]$ such that $\alpha f(u) + (1-\alpha)f(v) > f\bigl(\frac{1}{2}u + \frac{1}{2}v\bigr)$.
\end{lemma}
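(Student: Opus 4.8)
The plan is to argue by contradiction, combining \hyperref[theorem:sierpinski]{Sierpiński's theorem} with a doubling/blow-up argument. Suppose the conclusion fails for some $\varepsilon \in (0,1/2)$, some non-empty open interval $I$, and some weakly increasing non-constant $f : I \to \R$. Then for every $u < v$ in $I$ and every $\alpha \in [1/2, 1/2+\varepsilon]$ we have $\alpha f(v) + (1-\alpha) f(u) \leq f(\tfrac12 u + \tfrac12 v)$; specialising to $\alpha = 1/2 + \varepsilon$ gives $(\tfrac12+\varepsilon) f(v) + (\tfrac12-\varepsilon) f(u) \leq f(\tfrac12 u + \tfrac12 v)$ for all $u < v$ in $I$ --- call this inequality $(\star)$. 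Since $f$ is weakly increasing, $f(v) \geq f(u)$, so the left-hand side of $(\star)$ is at least $\tfrac12 f(u) + \tfrac12 f(v)$; hence $f$ is midpoint-concave. As $f$ is monotone it is Lebesgue-measurable, so by \hyperref[theorem:sierpinski]{Sierpiński's theorem} (whose conclusion holds verbatim for measurable midpoint-concave functions on an interval) $f$ is concave on $I$.

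Next I would use $(\star)$ to force the left-endpoint difference quotients of $f$ to explode. By non-constancy, fix $u_0 < v_0$ in $I$ with $f(u_0) < f(v_0)$, and write $Q(a,b) \coloneqq \bigl(f(b)-f(a)\bigr)/(b-a)$ for $a < b$ in $I$, so $Q(u_0, v_0) > 0$. Set $m_0 \coloneqq v_0$ and $m_{k+1} \coloneqq \tfrac12 u_0 + \tfrac12 m_k$, so that $m_k = u_0 + 2^{-k}(v_0 - u_0) \in (u_0, v_0] \subseteq I$ and $m_{k+1}$ is the midpoint of $[u_0, m_k]$. Applying $(\star)$ on $[u_0, m_k]$ gives $f(m_{k+1}) - f(u_0) \geq (\tfrac12+\varepsilon)\bigl(f(m_k) - f(u_0)\bigr)$, so inductively $f(m_k) - f(u_0) \geq (\tfrac12+\varepsilon)^k \bigl(f(v_0) - f(u_0)\bigr)$, and therefore $Q(u_0, m_k) \geq (1+2\varepsilon)^k\, Q(u_0, v_0) \to \infty$ since $1 + 2\varepsilon > 1$.

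Finally I would extract the contradiction from concavity. Picking any $w_0 \in I$ with $w_0 < u_0$ (possible since $I$ is open), the three-chords inequality for the concave function $f$ gives $Q(u_0, m_k) \leq Q(w_0, u_0) < \infty$ for every $k$, contradicting $Q(u_0, m_k) \to \infty$. This completes the proof.

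The doubling recursion and the three-chords bound are routine; the step requiring care --- and, I expect, the main obstacle --- is the appeal to Sierpiński's theorem to upgrade midpoint-concavity (which is all that $(\star)$ and monotonicity directly give) to genuine concavity. Without concavity, monotonicity alone does not bound $Q(u_0, m_k)$, so the blow-up would not be contradictory; it is precisely the measurability of monotone functions, fed into Sierpiński's theorem, that closes the gap.
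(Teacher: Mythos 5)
Your proof is correct, and it shares the paper's opening move---negate the conclusion, note that a monotone function is Lebesgue-measurable, and invoke Sierpiński's theorem to upgrade midpoint-concavity to concavity (incidentally, midpoint-concavity already follows from the negated statement at $\alpha = 1/2$, so your detour through $\alpha = 1/2+\varepsilon$ plus monotonicity is unnecessary at that step)---but the way you extract the contradiction is genuinely different. The paper uses the fact that a concave function is differentiable almost everywhere, picks an interior point $m$ with $f'(m)>0$ (which exists because $f$ is concave, weakly increasing and non-constant), and computes $\lim_{\delta \to 0}\delta^{-1}\bigl[(\tfrac12+\varepsilon)f(m+\delta)+(\tfrac12-\varepsilon)f(m-\delta)-f(m)\bigr] = 2\varepsilon f'(m) > 0$, which contradicts the negated inequality at $u = m-\delta$, $v = m+\delta$, $\alpha = \tfrac12+\varepsilon$. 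You instead iterate $(\star)$ along midpoints converging to $u_0$, forcing the chord slopes $Q(u_0,m_k) \geq (1+2\varepsilon)^k\,Q(u_0,v_0) \to \infty$, and contradict the three-chord inequality for concave functions via a point $w_0 < u_0$ in the open interval $I$ (openness of $I$ is indeed what makes that bound available). Both mechanisms exploit the extra $\varepsilon$---the paper as the positive factor $2\varepsilon f'(m)$, you as the geometric rate $1+2\varepsilon$; your finish avoids any appeal to a.e.\ differentiability of concave functions, needing only elementary chord-slope monotonicity, at the cost of the doubling recursion, whereas the paper's conclusion is more local and shorter once differentiability is granted.
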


\begin{proof}[Proof of \Cref{lemma:cs}]
	Suppose toward a contradiction that $f : I \to \R$ is weakly increasing and non-constant and that for some $\varepsilon \in (0,\infty)$, $\alpha f(u) + (1-\alpha)f(v) \leq f\bigl(\frac{1}{2}u + \frac{1}{2}v\bigr)$ for all $u>v$ in $I$ and $\alpha \in [1/2,1/2+\varepsilon]$. Since $f$ is Lebesgue-measurable (being weakly increasing), it follows by \hyperref[theorem:sierpinski]{Sierpiński's theorem} (\cref{sec:background:sierpinski}) that $f$ is concave. Since $f$ is concave, weakly increasing and non-constant, there must be a $w \in I$ such that $f'(w)$ exists and is strictly positive.%
        \footnote{Concavity implies absolute continuity, which by the (Lebesgue) fundamental theorem of calculus implies that $f'$ exists a.e. and $f(u)-f(v) = \int_v^u f'$ for all $u,v \in I$. We have $f' \geq 0$ since $f$ is weakly increasing, and we cannot have $f'=0$ since $f$ is non-constant.}
    Hence
    \begin{equation*}
		\lim_{\delta \to 0}
		\frac{\left(1/2+\varepsilon\right) f(w+\delta) + \left(1/2-\varepsilon\right)f(w-\delta) - f(w)}{\delta}
		= 2 \varepsilon f'(w) > 0 ,
	\end{equation*}
    so we may choose a (small) $\delta > 0$ such that
	\begin{equation*}
		\left(1/2+\varepsilon\right) f(w+\delta)
		+ \left(1/2-\varepsilon\right) f(w-\delta)
		> f(w) .
	\end{equation*}
	But then $\alpha f(u) + (1-\alpha)f(v) > f\bigl(\frac{1}{2}u + \frac{1}{2}v\bigr)$ holds for $\alpha \coloneqq 1/2+\varepsilon$, $u \coloneqq w+\delta$ and $v \coloneqq w-\delta$, a contradiction.
\end{proof}

\section{Proof of necessity in \texorpdfstring{\Cref{proposition:dom_component}}{Proposition~\ref{proposition:dom_component}}}
\label{proposition:dom_component_nec}

Note that property~\ref{item:maintheorem2} implies property~\ref{item:NWDcomponent}. Hence to establish the necessity claims in \Cref{proposition:dom_component}, it is enough to establish the following lemma.

\begin{lemma}
\label{prop1_nec}
Each of the following holds.
\begin{enumerate}[label=(\alph*)]
\item \label{prop1a_nec} If properties~(i) and (ii) in \Cref{proposition:dom_component}\ref{dom_component:nsd_main} both fail, then property~\ref{item:maintheorem1} fails.
\item \label{prop1b_nec} If properties~(i) and (ii) in \Cref{proposition:dom_component}\ref{dom_component:nsd} both fail, then property~\ref{item:NSDcomponent} fails.
\item \label{prop1c_nec} If properties~(i) and (ii) in \Cref{proposition:dom_component}\ref{dom_component:nwd} both fail, then property~\ref{item:NWDcomponent} fails.
\end{enumerate}
\end{lemma}

The proof of each part of \Cref{prop1_nec} is a variation on the core necessity argument laid out in \cref{sec:result:props} (in the proof of \Cref{corr:dom_component_single}), and uses \Cref{lemma:cs} from \cref{proposition:dom_component_lemma}.

\begin{proof}[Proof of \Cref{prop1_nec}\ref{prop1a_nec}] Fix arbitrary $x \neq y \neq z \neq x$. Assume that properties~(i) and (ii) in \Cref{proposition:dom_component}\ref{dom_component:nsd_main} both fail; we must show that property~\ref{item:maintheorem1} fails. Note that property~(i) can fail in two ways: either $\phi_s$ fails to be weakly increasing for some $s \in \Situations$, or $\phi_s$ fails to be concave for some $s \in \Situations$. Similarly, property~(ii) can fail in two ways: either there is an $s \in \Situations$ such that $\phi_s$ is not weakly increasing, or $\phi_s$ is non-constant for at least two distinct $s \in \Situations$.

Suppose that property~(i) fails in the first way or that property~(ii) fails in the first way; in either case, there is an $s \in \Situations$ such that $\phi_s$ is not weakly increasing, meaning that there are $u>v$ such that $\phi_s(u)<\phi_s(v)$. Then for $U : \{x,y\} \times \Situations \to \R$ satisfying $U(x,s) = u$ and $U(y,s) = v$, it holds that $\text{NSD}(\{x,y\},\{s\},U) \ni x \notin \text{NSD}(\{x,y\},\{s\},\Phi \circ U)$, so property~\ref{item:maintheorem1} fails.

Assume for the remainder that properties~(i) and (ii) do not fail in the first way. Then they both fail in the second way: there are $s \neq t$ in $\Situations$ such that $\phi_s$ is weakly increasing but not concave and $\phi_t$ is weakly increasing and non-constant. By \hyperref[theorem:sierpinski]{Sierpiński's theorem} from \cref{sec:background:sierpinski} above (applicable since increasing functions are Lebesgue-measurable), there exist $u>v$ such that $\frac{1}{2} \phi_s(u) + \frac{1}{2} \phi_s(v) > \phi_s\bigl( \frac{1}{2} u + \frac{1}{2} v \bigr)$. Then there is a (small) $\varepsilon \in (0,\infty)$ such that $(1-\beta) \phi_s(u) + \beta \phi_s(v) > \phi_s\bigl( \frac{1}{2} u + \frac{1}{2} v \bigr)$ for every $\beta \in [1/2,1/2+\varepsilon]$. By \Cref{lemma:cs} from \cref{proposition:dom_component_lemma} (applicable since $\phi_t$ is weakly increasing and non-constant), there exist $u'>v'$ and $\alpha \in [1/2,1/2+\varepsilon]$ such that $\alpha \phi_t(u') + (1-\alpha) \phi_t(v') > \phi_t\bigl( \frac{1}{2} u' + \frac{1}{2} v' \bigr)$. Then for $U : \{x,y,z\} \times \Situations \to \R$ satisfying $U(x,s) = u$, $U(x,t) = v'$, $U(y,s) = v$, $U(y,t) = u'$, $U(z,s) = \frac{1}{2} u + \frac{1}{2} v$ and $U(z,t) = \frac{1}{2} u' + \frac{1}{2} v'$, we have $z \in \text{NSD}(\{x,y,z\},\{s,t\},U)$, and keeping in mind the mixture $p \in \Delta(\{x,y,z\})$ with $p(x) \coloneqq 1-\alpha$ and $p(y) \coloneqq \alpha$, we see that $z \notin \text{NSD}(\{x,y,z\},\{s,t\},\Phi \circ U)$. Thus property~\ref{item:maintheorem1} fails.
\end{proof}

\begin{proof}[Proof of \Cref{prop1_nec}\ref{prop1b_nec}]
Fix arbitrary $x \neq y \neq z \neq x$. Assume that properties~(i) and (ii) in \Cref{proposition:dom_component}\ref{dom_component:nsd} both fail; we must show that property~\ref{item:NSDcomponent} fails. The fact that property~(ii) fails means that for every $s \in \Situations$, there are $u_s \neq v_s$ such that $\phi_s(u_s) > \phi_s(v_s)$. Property~(i) can fail in two ways: either $\phi_s$ fails to be weakly increasing for some $s \in \Situations$, or $\phi_s$ fails to be concave for some $s \in \Situations$.

Suppose that property~(i) fails in the first way: there is an $s \in \Situations$ such that $\phi_s$ is not weakly increasing, meaning that there exist $u>v$ such that $\phi_s(u)<\phi_s(v)$. Then for $U : \{x,y\} \times \Situations \to \R$ given by $U(x,s) \coloneqq u$, $U(y,s) \coloneqq v$, and $U(x,t) \coloneqq v_t$ and $U(y,t) \coloneqq u_t$ for every $t \in \Situations \setminus \{s\}$, it holds that $\text{NSD}(\{x,y\},\Situations,U) \ni x \notin \text{NSD}(\{x,y\},\Situations,\Phi \circ U)$, so property~\ref{item:NSDcomponent} fails.

Assume for the remainder that property~(i) does not fail in the first way. Then it fails in the second way: there is an $s \in \Situations$ such that $\phi_s$ is weakly increasing but not concave. By \hyperref[theorem:sierpinski]{Sierpiński's theorem} from \cref{sec:background:sierpinski} above (applicable since increasing functions are Lebesgue-measurable), there exist $u>v$ such that $\frac{1}{2} \phi_s(u) + \frac{1}{2} \phi_s(v) > \phi_s\bigl( \frac{1}{2} u + \frac{1}{2} v \bigr)$. Then there is a (small) $\varepsilon \in (0,\infty)$ such that $(1-\beta) \phi_s(u) + \beta \phi_s(v) > \phi_s\bigl( \frac{1}{2} u + \frac{1}{2} v \bigr)$ for every $\beta \in [1/2,1/2+\varepsilon]$. Fix an arbitrary $t \in \Situations \setminus \{s\}$. By \Cref{lemma:cs} from \cref{proposition:dom_component_lemma} (applicable since $\phi_t$ is weakly increasing and non-constant, as property~(ii) fails and property~(i) does not fail in the first way), there exist $u'>v'$ and $\alpha \in [1/2,1/2+\varepsilon]$ such that $\alpha \phi_t(u') + (1-\alpha) \phi_t(v') > \phi_t\bigl( \frac{1}{2} u' + \frac{1}{2} v' \bigr)$. Then for $U : \{x,y,z\} \times \Situations \to \R$ given by $U(x,s) \coloneqq u$, $U(y,s) \coloneqq v$, $U(z,s) \coloneqq \frac{1}{2} u + \frac{1}{2} v$, $U(x,t) \coloneqq v'$, $U(y,t) \coloneqq u'$, $U(z,t) \coloneqq \frac{1}{2} u' + \frac{1}{2} v'$, and $U(x,r) \coloneqq u_r$, $U(y,r) \coloneqq u_r$ and $U(z,r) \coloneqq v_r$ for every $r \in \Situations \setminus \{s,t\}$, we have $z \in \text{NSD}(\{x,y,z\},\Situations,U)$, and keeping in mind the mixture $p \in \Delta(\{x,y,z\})$ with $p(x) \coloneqq 1-\alpha$ and $p(y) \coloneqq \alpha$, we see that $z \notin \text{NSD}(\{x,y,z\},\Situations,\Phi \circ U)$. Thus property~\ref{item:NSDcomponent} fails.
\end{proof}

\begin{proof}[Proof of \Cref{prop1_nec}\ref{prop1c_nec}]
Fix arbitrary $x \neq y \neq z \neq x$. Assume that properties~(i) and (ii) in \Cref{proposition:dom_component}\ref{dom_component:nwd} both fail; we must show that property~\ref{item:NWDcomponent} fails. The fact that property~(ii) fails means that there exists a $t \in \Situations$ such that $\phi_t$ is non-constant, i.e. $\phi_t(u_t) > \phi_t(v_t)$ for some $u_t \neq v_t$. Property~(i) can fail in two ways: either $\phi_s$ fails to be strictly increasing for some $s \in \Situations$, or $\phi_s$ fails to be concave for some $s \in \Situations$.

Suppose that property~(i) fails in the first way: there is an $s \in \Situations$ such that $\phi_s$ is not strictly increasing, meaning that there are $u>v$ such that $\phi_s(u) \leq \phi_s(v)$. Then for $U : \{x,y\} \times \Situations \to \R$ given by $U(x,s) \coloneqq u$, $U(x,t) \coloneqq v_t$, $U(y,s) \coloneqq v$, $U(y,t) \coloneqq u_t$, and $U(x,r) \coloneqq 0 \eqqcolon U(y,r)$ for every $r \in \Situations \setminus \{s,t\}$, it holds that $\text{NWD}(\{x,y\},\Situations,U) \ni x \notin \text{NWD}(\{x,y\},\Situations,\Phi \circ U)$, so property~\ref{item:NWDcomponent} fails.

Assume for the remainder that property~(i) does not fail in the first way. Then property~(i) fails in the second way: there is an $s \in \Situations$ such that $\phi_s$ is strictly increasing but not concave. By \hyperref[theorem:sierpinski]{Sierpiński's theorem} from \cref{sec:background:sierpinski} above (applicable since increasing functions are Lebesgue-measurable), there exist $u>v$ such that $\frac{1}{2} \phi_s(u) + \frac{1}{2} \phi_s(v) > \phi_s\bigl( \frac{1}{2} u + \frac{1}{2} v \bigr)$. Then there is a (small) $\varepsilon \in (0,\infty)$ such that $(1-\beta) \phi_s(u) + \beta \phi_s(v) > \phi_s\bigl( \frac{1}{2} u + \frac{1}{2} v \bigr)$ for every $\beta \in [1/2,1/2+\varepsilon]$. By \Cref{lemma:cs} from \cref{proposition:dom_component_lemma} (applicable since $\phi_t$ is strictly increasing as property~(i) does not fail in the first way), there exist $u'>v'$ and $\alpha \in [1/2,1/2+\varepsilon]$ such that $\alpha \phi_t(u') + (1-\alpha) \phi_t(v') > \phi_t\bigl( \frac{1}{2} u' + \frac{1}{2} v' \bigr)$. Then for $U : \{x,y,z\} \times \Situations \to \R$ given by $U(x,s) \coloneqq u$, $U(y,s) \coloneqq v$, $U(z,s) \coloneqq \frac{1}{2} u + \frac{1}{2} v$, $U(x,t) \coloneqq v'$, $U(y,t) \coloneqq u'$, $U(z,t) \coloneqq \frac{1}{2} u' + \frac{1}{2} v'$, and $U(x,r) \coloneqq 0$, $U(y,r) \coloneqq 0$ and $U(z,r) \coloneqq 0$ for every $r \in \Situations \setminus \{s,t\}$, we have $z \in \text{NWD}(\{x,y,z\},\Situations,U)$, and keeping in mind the mixture $p \in \Delta(\{x,y,z\})$ with $p(x) \coloneqq 1-\alpha$ and $p(y) \coloneqq \alpha$, we see that $z \notin \text{NWD}(\{x,y,z\},\Situations,\Phi \circ U)$. Thus property~\ref{item:NWDcomponent} fails.
\end{proof}

\section{Proof of necessity in \texorpdfstring{\Cref{proposition:dom_component2}}{Proposition~\ref{proposition:dom_component2}}}
\label{proposition:dom_component2_nec}

Note that property~\ref{item:shrink1} implies property~\ref{item:NSDcomponentshrink} and that property~\ref{item:shrink2} implies property~\ref{item:NWDcomponentshrink}. Hence to establish the necessity claims in \Cref{proposition:dom_component2}, it is enough to show that if 
\begin{enumerate}[label=($\star$)]
\item \label{starprop} $\phi_s$ is convex and strictly increasing for each $s \in \Situations$
\end{enumerate}
fails to hold, then properties~\ref{item:NSDcomponentshrink} and \ref{item:NWDcomponentshrink} both fail. The argument is a variation on the core necessity argument laid out in \cref{sec:result:props} (in the proof of \Cref{corr:dom_component_single}), and uses \Cref{lemma:cs} from \cref{proposition:dom_component_lemma}.

Fix arbitrary $x \neq y \neq z \neq x$. Assume that property~\ref{starprop} fails; we must show that properties~\ref{item:NSDcomponentshrink} and \ref{item:NWDcomponentshrink} both fail. There are two ways in which property~\ref{starprop} can fail: either $\phi_s$ fails to be strictly increasing for some $s \in \Situations$, or $\phi_s$ fails to be convex for some $s \in \Situations$.

Suppose that property~\ref{starprop} fails in the first way: there is an $s \in \Situations$ such that $\phi_s$ is not strictly increasing, meaning that there are $u>v$ such that $\phi_s(u) \leq \phi_s(v)$. Then for $U : \{x,y\} \times \Situations \to \R$ given by $U(x,s) \coloneqq v$, $U(x,t) \coloneqq 0$, $U(y,s) \coloneqq u$ and $U(y,t) \coloneqq 1$ for every $t \in \Situations \setminus \{s\}$, it holds that $\text{NSD}(\{x,y\},\Situations,U) \not\ni x \in \text{NSD}(\{x,y\},\Situations,\Phi \circ U)$, so property~\ref{item:NSDcomponentshrink} fails. Furthermore, for $V : \{x,y\} \times \Situations \to \R$ given by $V(\cdot,s) \coloneqq U(\cdot,s)$ and $V(x,t) \coloneqq 0$ and $V(y,t) \coloneqq 0$ for every $t \in \Situations \setminus \{s\}$, it holds that $\text{NWD}(\{x,y\},\Situations,V) \not\ni x \in \text{NWD}(\{x,y\},\Situations,\Phi \circ V)$, so property~\ref{item:NWDcomponentshrink} fails.

Assume for the remainder that property~\ref{starprop} does not fail in the first way. Then it fails in the second way: there is an $s \in \Situations$ such that $\phi_s$ is strictly increasing but not convex. Then $\phi_s^{-1} : \co(\phi_s(\R)) \to \R$ given by $\phi_s^{-1}(w) \coloneqq \inf\{w' \in \R : \phi_s(w') \ge w\}$ for each $w \in \co(\phi_s(\R))$ is increasing but not concave. By \hyperref[theorem:sierpinski]{Sierpiński's theorem} from \cref{sec:background:sierpinski} above (applicable since increasing functions are Lebesgue-measurable and $\co(\phi_s(\R))$ is a non-empty open interval), there exist $u>v$ such that $\frac{1}{2} \phi_s^{-1}(u) + \frac{1}{2} \phi_s^{-1}(v) > \phi_s^{-1}\bigl( \frac{1}{2} u + \frac{1}{2} v \bigr)$. Then there is a (small) $\varepsilon \in (0,\infty)$ such that $(1-\beta) \phi_s^{-1}(u) + \beta \phi_s^{-1}(v) > \phi_s^{-1}\bigl( \frac{1}{2} u + \frac{1}{2} v \bigr)$ for every $\beta \in [1/2,1/2+\varepsilon]$. Fix an arbitrary $t \in \Situations \setminus \{s\}$. Since property~\ref{starprop} does not fail in the first way, $\phi_t$ is strictly increasing, so the map $\phi_t^{-1} : \co(\phi_t(\R)) \to \R$ given by $\phi_t^{-1}(w) \coloneqq \inf\{w' \in \R: \phi_t(w') \ge w\}$ for each $w \in \co(\phi_t(\R))$ is well-defined, increasing, and non-constant. Then by \Cref{lemma:cs} from \cref{proposition:dom_component_lemma}, there exist $u'>v'$ and $\alpha \in [1/2,1/2+\varepsilon]$ such that $\alpha \phi_t^{-1}(u') + (1-\alpha) \phi_t^{-1}(v') > \phi_t^{-1}\bigl( \frac{1}{2} u' + \frac{1}{2} v' \bigr)$. Then for $U : \{x,y,z\} \times \Situations \to \R$ given by $U(x,s) \coloneqq \phi_s^{-1}(u)$, $U(y,s) \coloneqq \phi_s^{-1}(v)$, $U(z,s) \coloneqq \phi_s^{-1}\bigl(\frac{1}{2} u + \frac{1}{2} v\bigr)$, $U(x,t) \coloneqq \phi_t^{-1}(v')$, $U(y,t) \coloneqq \phi_t^{-1}(u')$, $U(z,t) \coloneqq \phi_t^{-1}\bigl(\frac{1}{2} u' + \frac{1}{2} v'\bigr)$, and $U(x,r) \coloneqq 1$, $U(y,r) \coloneqq 1$ and $U(z,r) \coloneqq 0$ for every $r \in \Situations \setminus \{s,t\}$, we have $z \in \text{NSD}(\{x,y,z\},\Situations,\Phi \circ U)$, and keeping in mind the mixture $p \in \Delta(\{x,y,z\})$ with $p(x) \coloneqq 1-\alpha$ and $p(y) \coloneqq \alpha$, we see that $z \notin \text{NSD}(\{x,y,z\},\Situations,U)$. Thus property~\ref{item:NSDcomponentshrink} fails. Furthermore, for $V : \{x,y,z\} \times \Situations \to \R$ given by $V(\cdot,s) \coloneqq U(\cdot,s)$, $V(\cdot,t) \coloneqq U(\cdot,t)$, and $V(x,r) \coloneqq 0$, $V(y,r) \coloneqq 0$ and $V(z,r) \coloneqq 0$ for every $r \in \Situations \setminus \{s,t\}$, we have $z \in \text{NWD}(\{x,y,z\},\Situations,\Phi \circ V)$, and keeping in mind the mixture $p \in \Delta(\{x,y,z\})$ with $p(x) \coloneqq 1-\alpha$ and $p(y) \coloneqq \alpha$, we see that $z \notin \text{NWD}(\{x,y,z\},\Situations,V)$. Thus property~\ref{item:NWDcomponentshrink} fails.
\qed

\section{Proof of \texorpdfstring{\Cref{thm:maintheorem}}{Theorem~\ref{thm:maintheorem}}}
\label{pf:maintheorem}

The following lemma asserts that situation-wiseness is necessary.

\begin{lemma}\label{lem:sitwise_nec}
    Assume that $2 \leq \abs*{\Situations} < \infty$, and consider a transformation $\Phi : \R^\Situations \to \R^\Situations$.

    \begin{enumerate}[label=(\roman*)]
        
        \item \label{lem:situation-wisedeux} If $\Phi$ is not situation-wise, then properties~\ref{item:maintheorem1} and \ref{item:maintheorem2} both fail.

        \item \label{lem:situation-wise} If $\Phi$ is not situation-wise, then properties~\ref{item:shrink1} and \ref{item:shrink2} both fail.
        
    \end{enumerate}
\end{lemma}

\begin{proof}[Proof of \Cref{thm:maintheorem}]
    Sufficiency follows from \Cref{proposition:dom_component,proposition:dom_component2}. For necessity, in light of \Cref{proposition:dom_component,proposition:dom_component2}, all that must be shown is that situation-wiseness is necessary for each of properties~\ref{item:maintheorem1}--\ref{item:shrink2}, and this follows from \Cref{lem:sitwise_nec}.
\end{proof}

As mentioned in \cref{sec:result}, \Cref{lem:sitwise_nec}\ref{lem:situation-wisedeux} is almost immediate, whereas \Cref{lem:sitwise_nec}\ref{lem:situation-wise} requires a non-trivial proof.

\begin{proof}[Proof of \Cref{lem:sitwise_nec}\ref{lem:situation-wisedeux}]
Fix arbitrary $x \neq y$. Suppose that $\Phi$ is not situation-wise; we will show that properties~\ref{item:maintheorem1} and \ref{item:maintheorem2} both fail. Since $\Phi$ is not situation-wise, there exist $s \in \Situations$ and $u \neq v$ with $u_s=v_s$ such that $\Phi_s(u) < \Phi_s(v)$. Then for $U : \{x,y\} \times \Situations \to \R$ given by $U(x,\cdot) \coloneqq u$ and $U(y,\cdot) \coloneqq v$, we have $x \in \text{NWD}(\{x,y\},\{s\},U) \subseteq \text{NSD}(\{x,y\},\{s\},U)$ and $x \notin \text{NSD}(\{x,y\},\{s\},\Phi \circ U) \supseteq \text{NWD}(\{x,y\},\{s\},\Phi \circ U)$, so properties~\ref{item:maintheorem1} and \ref{item:maintheorem2} both fail.
\end{proof}

\begin{proof}[Proof of \Cref{lem:sitwise_nec}\ref{lem:situation-wise}]
    Fix arbitrary $x \neq y \neq z \neq x$. Assume that either property~\ref{item:shrink1} or property~\ref{item:shrink2} holds; we will show that $\Phi$ is situation-wise.
    
    We claim that $\Phi$ is \emph{strictly situation-wise monotone} in the sense that for any $u,v \in \R^\Situations$ and any situation $s \in \Situations$, $u_s > v_s$ implies $\Phi_s(u) > \Phi_s(v)$. To see why, suppose toward a contradiction that $u_s > v_s$ and $\Phi_s(u) \leq \Phi_s(v)$ for some $u,v \in \R^\Situations$ and $s \in \Situations$. Then letting $U : \{x,y\} \times \Situations \to \R$ be given by $U(x,\cdot) \coloneqq v$ and $U(y,\cdot) \coloneqq u$, we have $x \notin \text{NSD}(\{x,y\},\{s\},U) \supseteq \text{NWD}(\{x,y\},\{s\},U)$ but $x \in \text{NWD}(\{x,y\},\{s\},\Phi \circ U) \subseteq \text{NSD}(\{x,y\},\{s\},\Phi \circ U)$, so properties~\ref{item:shrink1} and \ref{item:shrink2} both fail---a contradiction.

    For each $s \in \Situations$, define $\phi_s : \R \to \R$ by $\phi_s(u_s) \coloneqq \Phi_s(u_s \1^s)$ for every $u_s \in \R$, where $\1^s \in \{0,1\}^\Situations$ is given by $\1^s_s \coloneqq 1$ and $\1^s_t \coloneqq 0$ for every $t \in \Situations \setminus \{s\}$. Strict situation-wise monotonicity implies that for each $s \in \Situations$, $\phi_s$ is strictly increasing and satisfies $\phi_s(u_s-\varepsilon) < \Phi_s(u) < \phi_s(u_s+\varepsilon)$ for every $u \in \R^\Situations$ and $\varepsilon > 0$. Letting $\varepsilon$ vanish yields
	\begin{equation}
		\label{eq:sandwich}
		\phi_s(u_s-) \le \Phi_s(u) \le \phi_s(u_s+) \quad \text{for all $u \in \R^\Situations$.}
        \tag{$\ast$}
	\end{equation}
    Hence to show that $\Phi$ is $(\phi_s)_{s \in \Situations}$-situation-wise, it suffices to show that $\phi_s$ is continuous for each $s \in \Situations$.

    To that end, suppose toward a contradiction that there are $s \in \Situations$ and $u \in \R$ such that $\phi_s(u-) < \phi_s(u+)$. It suffices to exhibit $a,b,c \in \R^\Situations$ such that $\text{NSD}(\{a,b,c\},\Situations,\mathrm{id}) \not\ni c \in \text{NWD}(\{a,b,c\},\Situations,\Phi)$, where $\mathrm{id}(d,\cdot) \coloneqq d$ for each $d \in \{a,b,c\}$, since then properties~\ref{item:shrink1} and \ref{item:shrink2} both fail---a contradiction.

    Fix an arbitrary $t \in \Situations \setminus \{s\}$. Since $\phi_s(u-) < \phi_s(u+)$ and since $\phi_s$ and $\phi_t$ are continuous a.e. (being increasing), we may choose a $v \in (u,\infty)$ such that
	\begin{equation}
		\label{eq:v}
		\phi_s(u+) > \frac{\phi_s(u-)+\phi_s(v)}2
        \tag{$\dagger$}
	\end{equation}
    and $\phi_s$ and $\phi_t$ are both continuous at $v$. Since $\phi_t$ is continuous at $v$ and strictly increasing, there exists a $w \in (u,v)$ such that 
	\begin{equation}
		\label{eq:w}
		\phi_t(w-) > \frac{\phi_t(u+) + \phi_t(v)}{2} .
        \tag{$\ddagger$}
	\end{equation}
    Choose $v' \in (-\infty,u)$ and $w' \in (u,w)$ close enough to $u$ that $\underline{\alpha} \coloneqq (w-v')/(v-v')$ and $\overline{\alpha} \coloneqq (v-w')/(v-v')$ satisfy $\underline{\alpha} < \overline{\alpha}$, and such that both $\phi_s$ and $\phi_t$ are continuous at $v'$. Note that $v'<u<w'<w<v$. Define $a,b,c \in \R^\Situations$ by $a_s \coloneqq v$, $b_s \coloneqq v'$, $c_s \coloneqq w'$, and $a_r \coloneqq v'$, $b_r \coloneqq v$, $c_r \coloneqq w$ for every $r \in \Situations \setminus \{s\}$. Then, keeping in mind a mixture $p \in \Delta(\{a,b,c\})$ with $\underline{\alpha} < p(b) < \overline{\alpha}$ and $p(a) = 1-p(b)$, we see that $c \notin \text{NSD}(\{a,b,c\},\Situations,\mathrm{id})$.
	
	It remains to show that $c \in \text{NWD}(\{a,b,c\},\Situations,\Phi)$. To this end, note that $b_s < c_s < a_s$ and $a_t < c_t < b_t$, so that $\Phi_s(b) < \Phi_s(c) < \Phi_s(a)$ and $\Phi_t(a) < \Phi_t(c) < \Phi_t(b)$ by strict situation-wise monotonicity. It therefore suffices to show that 
	\begin{equation*}
		\Phi_r(c) > \frac {\Phi_r(a)+\Phi_r(b)} 2 
        \quad \text{for $r=s$ and for $r=t$.\footnotemark}
	\end{equation*}
    \footnotetext{For any mixture $p \in \Delta(\{a,b,c\})$ with $p(c)<1$, if $p(a) \leq p(b)$ then $\Phi_s(c) > [1-p(c)] [ \Phi_s(a) + \Phi_s(b) ] / 2 + p(c) \Phi_s(c) \geq p(a) \Phi_s(a) + p(b) \Phi_s(b) + p(c) \Phi_s(c)$, and if $p(a) \geq p(b)$ then $\Phi_t(c) > [1-p(c)] [ \Phi_t(a) + \Phi_t(b) ] / 2 + p(c) \Phi_t(c) \geq p(a) \Phi_t(a) + p(b) \Phi_t(b) + p(c) \Phi_t(c)$.}
	For the case $r = s$, we have
	\begin{multline*}
		\Phi_s(c)
        \ge \phi_s(w'-)
        \ge \phi_s(u+)
        \\
        > \frac {\phi_s(u-)+\phi_s(v)}2
        \ge \frac{\phi_s(v')+\phi_s(v)}{2} 
		= \frac{\Phi_s(b)+\Phi_s(a)}{2},
	\end{multline*}
	where the first inequality holds by \eqref{eq:sandwich}, the second inequality holds since $w' > u$ and $\phi_s$ is increasing, the third inequality is \eqref{eq:v}, the fourth inequality holds since $u > v'$ and $\phi_s$ is increasing, and the equality follows from \eqref{eq:sandwich} and the fact that $\phi_s$ is continuous at $v$ and at $v'$.
	For the case $r = t$, we have
	\begin{equation*}
		\Phi_t(c) \ge \phi_t(w-) > \frac{\phi_t(u+) + \phi_t(v)}{2} \ge \frac{\phi_t(v') + \phi_t(v)}{2} = \frac {\Phi_t(a) + \Phi_t(b)}2
	\end{equation*}
	where the first inequality holds by \eqref{eq:sandwich}, the second inequality is \eqref{eq:w}, the third inequality holds since $u > v'$ and $\phi_t$ is increasing, and the equality follows from \eqref{eq:sandwich} and the fact that $\phi_t$ is continuous at $v$ and at $v'$.
\end{proof}

\section{Proof of \texorpdfstring{\Cref{theorem:psd}}{Theorem~\ref{theorem:psd}}}

\begin{proof}[Proof of \Cref{theorem:psd}]
    The sufficiency (`if') claims in \Cref{theorem:psd}\ref{npsd1}--\ref{npsd4} are easily established. It remains to show necessity (the `only if' claims). Fix arbitrary $x \neq y$.
    
    For the necessity (`only if') claims in \Cref{theorem:psd}\ref{npsd1}--\ref{npsd2}, assume that either property~\ref{propertynpsd1} or property~\ref{propertynpsd2} holds. Mimicking the proof of \Cref{lem:sitwise_nec}\ref{lem:situation-wisedeux} in \cref{pf:maintheorem} yields that $\Phi$ must be $(\phi_s)_{s \in \Situations}$-situation-wise for some family $(\phi_s)_{s \in \Situations}$ of functions $\R \to \R$. Furthermore, $\phi_s$ must be weakly increasing for every $s \in \Situations$. To see why, suppose toward a contradiction that there are $s \in \Situations$ and $u_s > v_s$ in $\R$ such that $\phi_s(u_s) < \phi_s(v_s)$. Then for $U : \{x,y\} \times \Situations \to \R$ satisfying $U(x,s) = u_s$ and $U(y,s) = v_s$, we have $x \in \widehat{\text{NWD}}(\{x,y\},\{s\},U) \subseteq \widehat{\text{NSD}}(\{x,y\},\{s\},U)$ and $x \notin \widehat{\text{NSD}}(\{x,y\},\{s\},\Phi \circ U) \supseteq \widehat{\text{NWD}}(\{x,y\},\{s\},\Phi \circ U)$, so that properties~\ref{propertynpsd1} and \ref{propertynpsd2} both fail---a contradiction.

    This establishes the `only if' part of \Cref{theorem:psd}\ref{npsd1}. To establish the `only if' part of \Cref{theorem:psd}\ref{npsd2}, suppose that there is an $s \in \Situations$ such that $\phi_s$ is not strictly increasing and that there is at least one $t \in \Situations$ such that $\phi_t$ is non-constant; we must show that property~\ref{propertynpsd2} fails. 

    \begin{itemize}
        
        \item Suppose first that there is a $t \in \Situations \setminus \{s\}$ such that $\phi_t$ is non-constant. Then (recalling that $\phi_s$ and $\phi_t$ are weakly increasing) there are $u>v$ and $u'>v'$ in $\R$ such that $\phi_s(u)=\phi_s(v)$ and $\phi_t(u')>\phi_t(v')$. Hence for $U : \{x,y\} \times \Situations \to \R$ satisfying $U(x,s) = u$, $U(x,t) = v'$, $U(y,s) = v$ and $U(y,t) = u'$, we have $\widehat{\text{NWD}}(\{x,y\},\{s,t\},U) \ni x \notin \widehat{\text{NWD}}(\{x,y\},\{s,t\},\Phi \circ U)$, so property~\ref{propertynpsd2} fails.
        
        \item Suppose instead that $\phi_t$ is constant for every $t \in \Situations \setminus \{s\}$. Then $\phi_s$ is (weakly increasing and) non-constant, so there are $u>v$ such that $\phi_s(u)>\phi_s(v)$. Fix arbitrary $t \in \Situations \setminus \{s\}$ and $u'>v'$ in $\R$. Then for $U : \{x,y\} \times \Situations \to \R$ satisfying $U(x,s) = v$, $U(x,t) = u'$, $U(y,s) = u$ and $U(y,t) = v'$, we have $\widehat{\text{NWD}}(\{x,y\},\{s,t\},U) \ni x \notin \widehat{\text{NWD}}(\{x,y\},\{s,t\},\Phi \circ U)$, so property~\ref{propertynpsd2} fails.
        
    \end{itemize}

    For the necessity (`only if') claims in \Cref{theorem:psd}\ref{npsd3}--\ref{npsd4}, assume that either property~\ref{propertynpsd3} or property~\ref{propertynpsd4} holds. Mimicking (the beginning of) the proof of \Cref{lem:sitwise_nec}\ref{lem:situation-wise} in \cref{pf:maintheorem} yields that $\Phi$ must \emph{strictly situation-wise monotone} in the sense that for any $u,v \in \R^\Situations$ and any situation $s \in \Situations$, $u_s > v_s$ implies $\Phi_s(u) > \Phi_s(v)$.

    This establishes the `only if' part of \Cref{theorem:psd}\ref{npsd3}. To establish the `only if' part of \Cref{theorem:psd}\ref{npsd4}, suppose that there are $u,v \in \R^\Situations$ and $s,t \in \Situations$ such that $u_s > v_s$, $u_r \geq v_r$ for every $r \in \Situations \setminus \{s\}$, and $\Phi_t(u) < \Phi_t(v)$; we must show that property~\ref{propertynpsd4} fails. By strict situation-wise monotonicity, $\Phi_s(u) > \Phi_s(v)$ and (thus) $s \neq t$. Then for $U : \{x,y\} \times \Situations \to \R$ given by $U(x,\cdot) \coloneqq v$ and $U(y,\cdot) \coloneqq u$, we have $\widehat{\text{NWD}}(\{x,y\},\{s,t\},\Phi \circ U) \ni x \notin \widehat{\text{NWD}}(\{x,y\},\{s,t\},U)$, so property~\ref{propertynpsd4} fails.
\end{proof}

\section{Proof of the claim in \texorpdfstring{\Cref{remark:rich}}{Remark~\ref{remark:rich}}}
\label{app:pf_remark:rich}

We will exhibit a set $\Situations$ with $2 \leq \abs*{\Situations} < \infty$ and a map $\Phi : \Z^\Situations \to \Z^\Situations$ that is not situation-wise, and yet satisfies properties~\ref{item:shrink1} and \ref{item:shrink2} when restricted to finite problems with integer-valued payoffs, i.e. satisfies the properties
\begin{enumerate}[label=($\arabic*_\Z$)]
\setcounter{enumi}{2}
\item \label{item:shrink1_Z} $\text{NSD}(\objects,\situations,\Phi \circ U) \subseteq \text{NSD}(\objects,\situations,U)$ for every finite problem $(\objects,\situations,U)$ such that $U(x,\cdot) \in \Z^\Situations$ for each $x \in \objects$.
\item \label{item:shrink2_Z} $\text{NWD}(\objects,\situations,\Phi \circ U) \subseteq \text{NWD}(\objects,\situations,U)$ for every finite problem $(\objects,\situations,U)$ such that $U(x,\cdot) \in \Z^\Situations$ for each $x \in \objects$.
\end{enumerate}
To that end, let $\Situations \coloneqq \{1,2\}$, and define $\Phi : \Z^2 \to \Z^2$ by
\begin{equation*}
    \Phi(u_1,u_2) \coloneqq 
    \begin{cases}
        (u_1,u_2) & \text{if $u_1 < 0$ or if $u_1 = 0$ and $u_2 < 0$}\\
        (1,u_2) & \text{if $u_1 = 0$ and $u_2 \ge 0$}\\
        (3u_1,u_2) & \text{if $u_1 > 0$.}
    \end{cases}
\end{equation*}
Clearly $\Phi$ is not situation-wise: there do not exist functions $\phi_1,\phi_2 : \R \to \R$ such that $\Phi(u_1,u_2) = (\phi_1(u_1),\phi_2(u_2))$ for every $(u_1,u_2) \in \Z^2$.

For any finite problem $(\objects,\situations,U)$ and any $x \in \objects$, write
\begin{equation*}
    Y_x(\objects,\situations,U) \coloneqq \{ y \in \objects: \text{$U(y,s) > U(x,s)$ for some $s \in \situations$} \} .
\end{equation*}
To show that properties~\ref{item:shrink1_Z} and \ref{item:shrink2_Z} hold, we shall that
\begin{enumerate}[label=($\lozenge$)]
\item \label{diamondprop} for any finite problem $(\objects,\situations,U)$ such that $U(y,\cdot) \in \Z^2$ for each $y \in \objects$, any $x\in \objects$, any $p \in \Delta(Y_x(\objects,\situations,U))$, and any $s \in \situations$,
\begin{align*}
    \sum_{y \in \objects} U(y,s) p(y) &\underset{(>)}{\geq} U(x,s)
    \\
    \text{implies} \quad
    \sum_{y \in \objects} \Phi_s(U(y,\cdot)) p(y) &\underset{(>)}{\geq} \Phi_s(U(x,\cdot)) .
\end{align*}
\end{enumerate}
This implies property~\ref{item:shrink1_Z} because for any finite problem $(\objects,\situations,U)$ such that $U(y,\cdot) \in \Z^2$ for each $y \in \objects$ and any $x \in \objects \setminus \text{NSD}(\objects,\situations,U)$, there exists a $p \in \Delta(Y_x(\objects,\situations,U))$ such that  
\begin{equation*}
    \sum_{y \in \objects} U(y,s) p(y) > U(x,s) \quad \text{for every $s \in \situations$,}
\end{equation*}
whence
\begin{equation*}
    \sum_{y \in \objects} \Phi_s(U(y,\cdot)) p(y) > \Phi_s(U(x,\cdot)) \quad \text{for every $s \in \situations$}
\end{equation*}
by property~\ref{diamondprop}, so that $x \notin \text{NSD}(\objects,\situations,\Phi \circ U)$. Similarly, property~\ref{diamondprop} implies property~\ref{item:shrink2_Z} since if $x \in \objects \setminus \text{NWD}(\objects,\situations,U)$ then there is a $p \in \Delta(Y_x(\objects,\situations,U))$ such that  
\begin{equation*}
\sum_{y \in \objects} U(y,s) p(y) \underset{(>)}{\geq} U(x,s) \quad \text{for $\underset{\text{(some)}}{\text{every}}$ $s \in \situations$,}
\end{equation*}
whence
\begin{equation*}
\sum_{y \in \objects} \Phi_s(U(y,\cdot)) p(y) \underset{(>)}{\geq} \Phi_s(U(x,\cdot)) \quad \text{for $\underset{\text{(some)}}{\text{every}}$ $s \in \situations$}
\end{equation*}
by property~\ref{diamondprop}, and thus $x \notin \text{NWD}(\objects,\situations,\Phi \circ U)$.

To establish that property~\ref{diamondprop} holds, fix a finite problem $(\objects,\situations,U)$ such that $U(y,\cdot) \in \Z^2$ for each $y \in \objects$, and further fix $x\in \objects$, $p \in \Delta(Y_x(\objects,\situations,U))$, and $s \in \situations$ such that 
\begin{equation*}
    \sum_{y \in \objects} U(y,s) p(y) \underset{(>)}{\geq} U(x,s) .
\end{equation*}
It suffices to exhibit a convex and strictly increasing $\phi : \R \to \R$ such that $\Phi_s(U(x,\cdot)) = \phi(U(x,s))$ and  $\Phi_s(U(y,\cdot)) \ge \phi(U(y,s))$ for all $y \in Y_x(\objects,\situations,U)$, since then
\begin{equation*}
    \sum_{y \in \objects} \Phi_s(U(y,\cdot)) p(y)
    \ge \sum_{y \in \objects} \phi(U(y,s)) p(y) \underset{(>)}{\geq} \phi(U(x,s)) = \Phi_s(U(x,\cdot))
\end{equation*}
by \hyperref[theorem:pratt]{Pratt's theorem}. If $s=2$, then the identity ($\phi$ given by $\phi(u) \coloneqq u$ for each $u \in \R$) has this property, since $\Phi_2(u_1,u_2) = u_2$ for all $(u_1,u_2) \in \Z^\Situations$.

Suppose instead that $s=1$. If either $U(x,1) \ne 0$ or $U(x,2) < 0$, then let $\phi : \R \to \R$ be given by
\begin{equation*}
    \phi(u) \coloneqq 
    \begin{cases}
        u & \text{if $u \le 0$}\\
        3u & \text{if $u > 0$,}
    \end{cases}
\end{equation*}
and note that $\phi$ has the desired properties since $\Phi_1(u_1,u_2) \ge \phi(u_1)$ for all $(u_1,u_2) \in \Z^2$, where the inequality is strict only if $u_1 = 0$ and $u_2 \ge 0$. If instead $U(x,1) = 0$ and $U(x,2) \ge 0$, then let $\phi : \R \to \R$ be given by 
\begin{equation*}
    \phi(u) \coloneqq 
    \begin{cases}
        u & \text{if $u < 0$}\\
        1 & \text{if $u = 0$}\\
        3u & \text{if $u > 0$,}
    \end{cases}
\end{equation*}
and note that $\phi$ has the desired properties since $\phi(U(x,1)) = 1 = \Phi_1(U(x,\cdot))$ and, writing
\begin{equation*}
B \coloneqq \left\{ (v_1,v_2) \in \Z^2 : \text{either $v_1 \neq 0$ or $v_2 \geq 0$} \right\} ,
\end{equation*}
we have $B \supseteq \{ U(y,\cdot) : y \in Y_x(\objects,\situations,U) \}$ and $\Phi_1(u_1,u_2) = \phi(u_1)$ for all $(u_1,u_2) \in B$.

\end{appendices}



\printbibliography[heading=bibintoc]

\end{document}